\documentclass[numbook, envcountsect, envcountsame, envcountreset, runningheads, smallextended]{svjour3}\usepackage{graphicx}

\smartqed 

\usepackage{amsfonts}
\usepackage{amsmath}
\usepackage{amssymb}
\usepackage{url}
\usepackage{fancyhdr}
\usepackage{indentfirst}
\usepackage{enumerate}
\usepackage{dsfont}

\usepackage[misc]{ifsym}
\usepackage{graphicx}

\usepackage[colorlinks=true,citecolor=blue]{hyperref}

\usepackage{color}
\usepackage{grffile}

\usepackage[numbers]{natbib}
\usepackage{comment}

\addtolength{\textheight}{.5\baselineskip}
\renewcommand{\d}{\mathrm{d}}
\newcommand{\laweq}{\buildrel \mathrm{d} \over =}

\newcommand{\bQto}{\buildrel {Q^*} \over \rightarrow}
\newcommand{\Qito}{\buildrel {Q_i} \over \rightarrow}

\newcommand{\VaR}{\mathrm{VaR}}
\newcommand{\ES}{\mathrm{ES}}
\newcommand{\MES}{\mathrm{MES}}
\newcommand{\GES}{\mathrm{iMES}}
\newcommand{\VES}{\mathrm{rMES}}
\newcommand{\AES}{\mathrm{AES}}

\newcommand{\MVaR}{\mathrm{MVaR}}

\newcommand{\E}{\mathbb{E}}

\newcommand{\R}{\mathbb{R}}
\newcommand{\N}{\mathbb{N}}
\newcommand{\p}{\mathbb{P}}
\newcommand{\X}{\mathcal{X}}
\newcommand{\Y}{\mathcal{Y}}

\newcommand{\Q}{\mathcal{Q}}
\newcommand{\uQ}{{\underline{Q}}}
\newcommand{\bQ}{\mathcal{P}}

\newcommand{\one}{\mathbf{1}}
\newcommand{\id}{\mathds{1}}

\renewcommand{\ge}{\geqslant}
\renewcommand{\le}{\leqslant}

\renewcommand{\epsilon}{\varepsilon}

\newcommand{\trd}{}


\begin{document}

\title{Scenario-based Risk Evaluation}
\titlerunning{Scenario-based Risk Evaluation}

\author{Ruodu Wang \and Johanna F.~Ziegel}

\institute{R.~Wang \at Department of Statistics and Actuarial Science, University of Waterloo, Canada\\
 \email{wang@uwaterloo.ca} \and J.~.F.~Ziegel \at Institute of Mathematical Statistics and Actuarial Science, University of Bern, Switzerland\\ \email{johanna.ziegel@stat.unibe.ch}}

\date{Received: date / Accepted: date}

\maketitle

\begin{abstract}
Risk measures such as Expected Shortfall (ES) and Value-at-Risk (VaR) have been prominent in banking regulation and financial risk management. Motivated by practical considerations in the assessment and management of risks, including tractability, scenario relevance and robustness, we consider theoretical properties of scenario-based risk evaluation. We establish axiomatic characterizations of scenario-based risk measures that are comonotonic-additive or coherent, and we obtain a novel ES-based representation result.  We propose several novel scenario-based risk measures, including various versions of Max-ES and Max-VaR, and study their properties. 
The theory is illustrated with financial data examples.
\keywords{Scenarios \and Risk measures \and Basel Accords \and Stress adjustment \and Dependence adjustment}
\subclass{60A99 \and 60A99}
\noindent\textbf{JEL classification}  C690 $\cdot$ G280
\end{abstract}

\section{Introduction}\label{sec:1}

\subsection{Background}\label{sec:11}

Risk measures are used in various contexts in banking and insurance, such as  regulatory capital calculation, optimization, decision making, performance analysis, and risk pricing;  see e.g.~\citet{MFE15} for a general review of quantitative risk management. In practice, risk measures have to be estimated from data. Therefore, it is often argued that one has to use a \emph{law-based risk measure} (or a \emph{statistical functional}), such as a Value-at-Risk (VaR) or an Expected Shortfall (ES), both standard risk measures used in banking and insurance.

However, even assuming that the distribution of a risk is accurately obtained, it may not be able to comprehensively describe the nature of the risk. From the regulatory perspective, a regulator is more concerned about the behavior of a risk in an adverse environment, e.g.~during a catastrophic financial event; see e.g.~\citet{AER12} for related discussions. Only the distribution of the risk may not be enough to distinguish a potentially huge loss in a financial crisis from a potentially huge loss in a common economy but no loss in a financial crisis. As another simple example, the profit/loss from a lottery and that from an insurance contract may have the same distribution, but they represent very different types of risks and can have very different effects on the decision maker or the society. Therefore, it may be useful to evaluate a risk under \emph{different stress scenarios}. Summing up these evaluations in a single number would necessarily lead to a \emph{non-law-based risk measure}.

Finally, it is usually unrealistic to assume that the distribution of a risk may be accurately obtained. Model uncertainty is a central component of the current challenges in risk measurement and regulation, and its importance in practice  has been pivotal after the 2007 financial crisis (see e.g.~\cite{OCC11}) in both the banking  (e.g.~\cite{BASEL16}) and the insurance sectors (e.g.~\cite{IAIS14}).
Model uncertainty may be due to statistical/parameter uncertainty or more generally, structural uncertainty of the model or of the economic system.
A \emph{robust approach} should take into account the distribution of the underlying risk under several plausible model assumptions.

In the framework of Basel III \& IV \cite{BASEL16}, the standard risk measure for market risk is an Expected Shortfall ($\ES_p$) at level $p=0.975$. Thus, the Basel Committee on Banking Supervision has opted for a law-based risk measure. However, while ES is the basic building block for market risk assessment, the initial ES estimates are subsequently modified, in particular, two important adjustments are a \emph{stress adjustment} and a \emph{dependence adjustment} \cite[p.52 -- p.69]{BASEL16}, which then leads to the \emph{capital charge for modellable risk factors} (abbreviated as IMCC in \cite{BASEL16}).

The aim of this paper is to present a theoretical approach to the construction of risk measures that incorporates modifications such as a stress and dependence adjustment of an initial law-based risk measure into the risk measure itself. We call such risk measures \emph{scenario-based risk measures}; see Definition \ref{def:1}. Our approach has the advantage that the final result of the risk estimation can be understood theoretically and properties such as coherence and comonotonic additivity can be studied not only for the initial law-invariant risk measure but for the final risk measure that is the relevant output for further actions and decisions, such as the IMCC in the Basel III \& IV framework.

Before presenting our theoretical framework, let us give some details on the latest regulatory framework of the Basel Committee on Banking Supervision to illustrate how they deal with the issues mentioned above.

In the Fundamental Review of the Trading Book (FRTB) of Basel III \& IV \cite{BASEL16} for market risk, the time horizon is 10 days (two trading weeks), and each risk position (random loss) is modelled as a function of risk factors, such as equity prices, interest rates, credit spreads, and volatilities. Each risk factor is adjusted according to their category of liquidity (see \citet{LX18} for a precise mathematical formulation).
\trd{For simplicity, in the illustration below we consider a linear portfolio.}
Let $X=\sum_{i=1}^n  X_i$ be the aggregate portfolio loss at a given day, where $X_1,\dots,X_n$ are the corresponding risk factors in the aggregation (with weights included). Below we outline two adjustments that the  FRTB uses to calculate the regulatory capital.

\subsubsection*{(i) Stress adjustment}
\begin{enumerate}[(a)]
\item Specify a set $R$ of reduced risk factors which has a  sufficiently long history of observation (at least spanning back to and including 2007), such that
the ratio $$\theta=\max\left\{\frac{\ES_{F}(X)}{
\ES_{R}(X)},1\right\}$$
is less than 4/3, where $\ES_{F}(X)=\ES_p(\sum_{i=1}^n X_i)$ is the  current ES value calculated using all risk factors, and $\ES_{R}(X)=\ES_p(\sum_{i\in R}X_i)$ is the current ES value calculated using the reduced risk factors. The ratio $\theta$ is treated like a constant and only needs to be updated weekly.

\item Compute  $\ES$ for a model with the reduced risk factors, \emph{``calibrated to the most severe 12-month period of stress"},
and this is denoted by $\ES_{R,S}(X)$.
The period of ``most severe stress'', also called the \emph{stress scenario} corresponds to the rolling window of data of length one year that leads to the maximum possible value of ES using the reduced risk factor model \cite[p.6]{BASEL17}.
 Mathematically, $\ES_{R,S}(X)$ involves taking a maximum over a set $\Q$ of distributions estimated from sequences of data of length one year (many of them overlapping), namely
 $$
 \ES_{R,S}(X)=\max_{Q\in \mathcal Q}\ES_p^Q\left(\sum_{i\in R} X_i\right).
 $$

\item Use the formula
$$\widetilde{\ES}(X)=\ES_{R,S}(X) \times \theta$$
to get the stress-adjusted ES value.
\end{enumerate}
In particular, if the portfolio loss is modelled by only risk factors of sufficiently long history (spanning back to 2007), then $R=\{1,\dots,n\}$ and  the adjusted ES value is
 $$
\widetilde{\ES}(X) =\max_{Q\in \mathcal Q}\ES_p^Q\left(\sum_{i=1}^n X_i\right)= \max_{Q\in \mathcal Q}\ES_p^Q(X).
 $$

\subsubsection*{(ii) Dependence adjustment}
\begin{enumerate}[(a)]
\item Risk factors in the portfolio are grouped into a range of broad regulatory risk classes (interest rate risk, equity risk, foreign exchange risk, commodity risk and credit spread risk).
For the stress scenario (see (i)(b)), compute the ES of each risk class (according to (i)),
and denote their sum by
$\widetilde{\ES}_{C}(X)$.
By  {comonotonic-additivity} and {subadditivity} of ES (see Section \ref{sec:intro} for details),
this calculation is equivalent to using a model where all classes of risk factors are comonotonic (``non-diversified"), and it
represents the worst-case value of $\ES$ among all possible dependence structures (e.g.~\citet{EPRWB14}).
\item Use the formula
$$ \ES(X)= \lambda \widetilde{\ES}(X)+ (1-\lambda) \widetilde{\ES}_{C}(X), $$
where $\lambda$ is a constant (right now, $\lambda$ is chosen as $0.5$).  The quantity $\ES(X)$ is called the IMCC of the portfolio.
\end{enumerate}
Intuitively, the logic behind adjustment (i) is that risk assessment should be made based on stressed financial periods,
and that  behind  adjustment (ii) is that the dependence structure between risk factors is difficult to specify and a worst-case value is combined with the
original model to protect from overly optimistic diversification effects in the model specification. In addition to (i) and (ii), the IMCC value will finally be adjusted by using the maximum of its present calculation and a moving average calculation of 60 days times a constant  (currently 1.5). See \citet{EPRWB14,EWW15} for discussions on the aggregation of risk measures under dependence uncertainty.

In summary, in the FRTB, ES of the {same portfolio} is estimated under {different scenarios and models}:
stress (stressed, non-stressed), and
dependence (diversified, non-diversified).
These values are aggregated with mainly two operations (iteratively): {maximum} and {linear combination}.
In Theorem \ref{th:th1-prime}, we show that these two operations indeed are the two most crucial operations which lead to  coherence in the sense of \citet{ADEH99} for scenario-based risk measures.
 Section \ref{sec:52} contains a detailed data analysis for the stress adjustment (i) outlined above.

We briefly mention two other prominent examples of risk evaluation using scenarios.  First, the margin requirements calculation developed by the Chicago Mercantile Exchange \cite{CME10} relies on the maximum of the portfolio loss over several specified hypothetical scenarios \cite[p.63]{MFE15}. Our data example in Section \ref{sec:51} is similar to this approach.
The second example comes from the practice of credit rating, where a structured finance security (e.g.~a defaultable bond) is rated according to its behavior (conditional distributions) under each economic stress scenario. This approach, in different specific forms, appears in both the Standard and Poor's and Moody's rating methodologies \cite{SP09,M10}.

In this paper, we propose an axiomatic framework of scenario-based risk evaluation, which
has the three merits mentioned above, and is consistent with many existing risk measurement
procedures including the above examples. We shall keep the Basel formulas as our
primary example in mind.

\subsection{Our contribution and the structure of the paper}
In Section  \ref{sec:intro}, we introduce scenario-based risk measures. They include classic law-based risk measures, non-law-based risk measures such as the systemic risk measures CoVaR and CoES (\citet{AB16}), and many practically used risk calculation principles such as the Basel formulas for market risk, the margin requirements by the Chicago Mercantile Exchange, and the common rating measures used in credit rating, as mentioned above.
Axiomatic characterizations of scenario-based risk measures are studied in Section \ref{sec:3}.
In particular, we characterize scenario-based comonotonic-additive as well as coherent risk measures, where many surprising mathematical challenges emerge. 
In Section \ref{sec:messec}, we introduce  Max-ES and Max-VaR, and related families of risk measures and study their properties.  Data analyses are given in Section \ref{sec:6}, highlighting the broad range of possible interpretations of scenarios. In particular, scenario-based risk measures can be easily implemented for stress analysis and capital calculation. 

Our framework builds upon the axiomatic theory of coherent risk measures as pioneered by \citet{ADEH99}. A comprehensive review on risk measures can be found in the books of \citet{D12} and \citet{FS16}.

Distinguished from the literature, the main mathematical challenges in our paper come   from the novel framework of treating risk measures as functionals on a space of random variables (as in the traditional setting) through the distributions of a random variable under each scenario (probability measure). The mathematical structure is very different from the one with tuples of distributions as inputs, such as the classic framework of \citet{AA63} (see also \citet{GS89,CMMM13,HM16}) in decision theory under ambiguity.
The distributions of a random variable under each scenario are not arbitrary.  \citet{SSWW19} recently studied the issue of compatibility between distributions and scenarios; for instance, if $Q_1$ and $Q_2$ are mutually equivalent and the distribution of $X$ is uniform under $Q_1$, then it cannot be normal under $Q_2$. 
If tuples of distributions are used as inputs, the ``geometry" (interdependence) of the set of scenarios does not play a role in the characterization results. This is in sharp contrast to our framework, where the interdependence among the set of scenarios plays an important role in the characterization of risk measures. 
See the detailed discussion in Section \ref{sec:41}  and Example \ref{ex:mono}, as well as   Theorem \ref{th:th1-prime}, where the choice of the set of scenarios clearly matters for the characterization result.  

\citet{KPH13} and \citet{KP16}  studied properties of
risk measures based on scenarios from different angles than ours. Various scenario-based risk measures also appear in \citet{ZF09,ZKR12,AB16,R18} in different disguises. Our contribution is to study the consequence of the scenario-based property  instead of specific examples; thus, our results yield axiomatic support for specific risk measures in the above literature. 
For recent developments of risk measures, including various practical issues of statistical analysis, robustness, model uncertainty, and optimization, we refer to  \citet{FisslerZiegel2016,CF17,KSZ17,DE17,ELW17,WZ20}, and the references therein. 
   
\section{Scenario-based risk measures}\label{sec:intro}
 
\subsection{Definitions}

Let $(\Omega, \mathcal F)$ be a measurable space
and $\bQ$ be the set of all  probability measures on $(\Omega, \mathcal F)$. 
 For any probability measure $Q$ on $(\Omega, \mathcal F)$, write $F_{X,Q}$ for the cumulative distribution function (cdf) of a random variable $X$ under $Q$, i.e., $F_{X,Q}(x)=Q[X\le x]$ for $x\in \R$, and denote by $X\sim_Q F$  if $F=F_{X,Q}$.
 For two random variables $X$ and $Y$ and a probability measure $Q$, we write $X\laweq_Q Y$ if $F_{X,Q}=F_{Y,Q}$.
 For any cdf $F$, its generalized inverse is defined as
$F^{-1}(t)=\inf\{x\in \R: F(x)\ge t\}$ for $t\in (0,1]$.
Let $\X$ be the space of bounded random variables in $(\Omega,\mathcal F)$,
and $\Y$  be a convex cone of random variables containing $\X$, representing the set of random variables of interest, which are possibly unbounded.
We fix $\X$ throughout, whereas $\Y$ is specific to the functional considered. For instance, when considering the expectation $\E^Q$ for some $Q\in \bQ$, its domain $\mathcal Y$ is often chosen as the space of $Q$-integrable random variables, which depends on the choice of $Q$. However, it does not hurt to think of $\mathcal Y=\mathcal X$ for the main part of the paper.
 A  probability measure $\p\in \bQ$ shall be chosen as a reference probability measure in this paper, which may be interpreted as the real-world probability measure in some applications.

In this paper we use the term \emph{scenario} for a probability measure $Q\in \bQ$. The reason behind this choice of terminology is from the perspective of scenario analysis, as in the following example. This example will be referred to a few times throughout the paper.

\begin{example}\label{ex:ex21}
Let $\Theta$ be a random economic factor taking values in a set $K$ and $Q_\theta[\cdot] =\p[\cdot |\Theta=\theta]$, $\theta\in K$, be regular conditional probabilities with reference to $\Theta$.
The set $\{\Theta=\theta\}\in \mathcal F$ represents a possible economic event for each $\theta \in K$.
To analyze the behavior of a risk $X$ under each scenario  $\Theta=\theta$, $\theta \in K$,
 the respective distributions of $X$ under the probability measures $Q_\theta$ are of interest.
 \end{example}

Suppose that there is a collection $\Q$ of scenarios of interest.  As mentioned in the introduction, there may be different interpretations for the set $\Q$. In what follows, we take a collection of scenarios of interest and we shall not distinguish among the interpretations.
If a risk (random loss) $X$  and another risk $Y$ have the same distribution under all relevant scenarios in $\Q$,
 then they should be assigned identical riskiness, whatever sense of riskiness we speak of.
 This leads to the following definition of $\Q$-based mappings.

\begin{definition}\label{def:1}
For a non-empty collection of scenarios $\Q\subset \bQ$, a mapping $\rho:\Y\to (-\infty,\infty]$ is \emph{$\Q$-based} if $\rho(X)=\rho(Y)$ for $X,Y\in \Y$ whenever $X\laweq_Q Y$ for all $Q\in \Q$.\end{definition}

To put the above concept into risk management, we focus  on \emph{$\Q$-based risk measures}.
A risk measure is a mapping from $\Y$ to $(-\infty,\infty]$, with $\rho(X)<\infty$ for a  bounded $X$.
We use the term \emph{risk measure} in a broad sense, as it also includes deviation measures (such as variance) and other risk functionals. To keep things concise, our main examples are traditional risk measures such as VaR and ES, although our framework includes deviation measures. For the latter, see \citet{RUZ06}.
In this paper, we adopt the sign convention as in  \citet{MFE15}: for a risk $X\in \mathcal Y$, losses are represented by positive values of $X$ and profits are represented by negative values of $X$.

An immediate example of a $\Q$-based risk measure   is  one that depends on the joint law of a risk and an economic factor $\Theta$ as in Example \ref{ex:ex21}. 
{By writing $\Q=\{\p[\cdot|\Theta=\theta]:\theta\in K\}$, 
$\rho$ is $\Q$-based if and only if $\rho(X)$ is determined by the joint distribution of $(X,\Theta)$.} This setting includes the systemic risk measures CoVaR and CoES, which are evaluated based on conditional distributions of risks given events (see \citet{AB16}).
For a fixed random variable $S$ (the system) and $p\in (0,1)$,
  the systemic risk measure  {CoVaR} \trd{of the institution loss $X\in \mathcal Y$} is defined as:
$$\mathrm{CoVaR}_p^{S}(X)=\VaR^\p_p(S|X=\VaR^\p_p(X)),$$
and the other systemic risk measure  {CoES}  is defined as:
$$\mathrm{CoES}_p^{S}(X)=\E^\p[S|S\ge \mathrm{CoVaR}^S_p(X)],~~X\in \mathcal Y.$$
Since CoVaR and CoES are determined by the joint distribution of $(X,S)$, they are $\Q$-based risk measures for
$\Q=\{\p[\cdot|S=s]:s\in \R\}$.

Clearly, the $\Q$-based risk measures are generalizations of \emph{law-based} (single-scenario-based) risk measures, which are determined by the law of random variables in a given probability space.   Thus,
 $\Q$-based risk measures bridge law-based ones and generic ones, by noting the relationship (assuming $\p \in \mathcal Q$)
 $$\underbrace{\{\p\}}_{\mbox{law-based}}\subset \underbrace{\Q}_{\mbox{$\Q$-based}}\ \subset \underbrace{\mathcal P}_{\mbox{generic}}.$$
 Some immediate facts about $\Q$-based risk measures are summarized in the following.

\begin{enumerate}[(i)]
\item
All risk measures on $\Y$ are $\bQ$-based. In fact, if $X\laweq_Q Y$  for all $Q\in \bQ$, then $X=Y$.

\item If $\Q_1\subset\Q_2\subset \bQ$, then a $\Q_1$-based risk measure is also $\Q_2$-based.
\item
For  $\Q_1,\dots,\Q_n \subset \bQ$, let $\Q=\cup_{i=1}^n\Q_i$
and $\rho_i:\Y \to \R$ be $\Q_i$-based, $i=1,\dots,n$. For any $f:\R^n\to \R$, the mapping $f\circ(\rho_1,\dots,\rho_n):\Y\to \R$ is $\Q$-based.
\end{enumerate}

To see that claim (i) holds, let $\omega \in  \Omega$ and define $Q:\mathcal F\to \R$, $A\mapsto \id_A(\omega) $. One can verify that $Q$ defines a probability measure.
The distributions of $X$ and $Y$ under $Q$ are simply the point mass at $X(\omega)$ and $Y(\omega)$, respectively.
Therefore, $X\laweq_Q Y$ implies that $X(\omega)=Y(\omega)$.

Next we introduce a special type of collections of probability measures, which fits naturally into the context of Example \ref{ex:ex21}.
\begin{definition}
A collection of probability measures $\Q\subset \bQ$ is \emph{mutually singular}
if there exist mutually disjoint sets $A_Q\in \mathcal F$, $Q\in \Q$, such that
$Q[A_Q]=1$ for $Q\in \Q$. 
\end{definition}

An example of this type would be to take $Q_i[B]=\p[B|A_i] $ for $B\in \mathcal F$ where $A_1,\dots,A_n$ is a partition of $\Omega$ with $\p[A_i] > 0$ for $i=1,\dots,n$. That is, each $Q_i$ amplifies the probability of the events $A_i$ of interest, commonly seen e.g.~in importance sampling.
In Example \ref{ex:ex21}, $\Q=\{Q_\theta:\theta\in K\}$ is mutually singular.

We say that a tuple $(Q_1,\dots,Q_n) \in \bQ^n$ is mutually singular if $\{Q_1,\dots,Q_n\}$ is mutually singular, and any two of $Q_1,\dots,Q_n$ are non-identical.

\begin{remark}\label{rem:1}
In this paper, scenarios are treated  in a generic sense. They may have different interpretations in different contexts. In a statistical context, they may represent different values of an estimated parameter in the model of the risk. In  a simulation-based model, they may represent different parameters in the simulation dynamics, or  
{different probabilities used in importance sampling}. In a regulatory framework, they may represent different economic situations that the regulator is concerned about. 
In a financial market, to assess a contingent payoff, one may need to incorporate its distribution under the pricing measure and under the physical measure,  under multiple pricing measures, or with different heterogeneous opinions about the physical probability measure; these situations naturally require a risk measure determined by the distribution of the risk under different measures.
\end{remark}

\subsection{Preliminaries on risk measures}\label{sec:22}

We adopt  the terminology in \citet{ADEH99,K01,FS02}.
A risk measure $\rho:\Y\to (-\infty,\infty]$ is \emph{cash-invariant} if  it holds that $\rho(X+ c)=\rho(X)+c$ for $c\in \R$ and $X\in \Y$;
$\rho$ is \emph{monotone} if $\rho(X)\le \rho(Y)$ for $X,Y\in \Y$ with $X\le Y$;
$\rho$ is \emph{positively homogeneous} if $\rho(\lambda  X)=\lambda\rho(X)$ for $\lambda \in (0,\infty)$ and $X\in \Y$,
 and
$\rho$ is \emph{subadditive} if $\rho(X+Y)\le \rho(X)+\rho(Y)$ for $X,Y\in \Y$.
A risk measure is said to be \emph{monetary} if it is monotone and cash-invariant.
A risk measure is said to be \emph{coherent} if it is monetary, positively homogeneous and subadditive.
Two random variables $X$ and $Y$ in $(\Omega,\mathcal F)$ are \emph{comonotonic} if
$(X(\omega)-X(\omega'))(Y(\omega)-Y(\omega'))\ge 0$ for all $\omega,\omega'\in \Omega$.
A risk measure $\rho$ is \emph{comonotonic-additive} if $\rho(X+Y)=\rho(X)+\rho(Y)$ whenever $X$ and $Y$ are comonotonic.

Let us define some classic risk measures based on a single scenario $Q\in \bQ$.
The most popular risk measures in banking and insurance regulation are
the Value-at-Risk (VaR) and the Expected Shortfall (ES), calculated under a fixed  probability measure $Q\in \mathcal P$.
We shall refer to them as $Q$-VaR and $Q$-ES, respectively.
For these risk measures, their domain $\Y$ can be chosen as any convex cone of random variables containing $\X$, possibly the entire set of random variables.
For $p\in (0,1]$,  $\VaR^Q_p:\Y \to (-\infty,\infty]$ is defined as
 \begin{equation}\label{eq:vardef} \VaR^Q_p(X)= \inf\{x\in \R: Q(X\le x)\ge p\}=F_{X,Q}^{-1}(p),~~~~ X\in \Y,\end{equation}
 and  for $p\in (0,1)$, $\ES_p^Q:\Y\to (-\infty,\infty]$ is defined as 
 \begin{equation}\label{eq:esdef} \ES^Q_p(X)=\frac{1}{1-p}\int_p^1 \VaR_q^Q(X)\d q, ~~~~ X\in \Y.\end{equation}
Since $-\infty<\VaR_p^Q(X) \le \VaR_q^Q(X)<\infty$ for $p\le q<1$, the integral \eqref{eq:esdef} is well defined.
  In addition, we let $\ES_1^Q(X)=\VaR_1^Q(X)$.
  
For a specified scenario $Q$, $Q$-VaR and $Q$-ES belong to the class of distortion risk measures.
Define the following sets of functions $$\mathcal G=\{g\colon[0,1] \to [0,1] : g\mbox{ is increasing  with $g(0)=0$ and $g(1)=1$}\},$$
and   $\mathcal G_+=\{g\in \mathcal G: g \mbox{ is concave}\}.$ In this paper, the terms ``increasing'', ``decreasing'' and ``set inclusion'' are in the non-strict sense.
A \emph{$Q$-distortion risk measure} is defined as 
\begin{equation}\label{eq:eqdistdef}
\rho^Q_g(X)=\int_{-\infty}^0 (g\circ Q[X> x]-1 )\d x+\int^{\infty}_0 g\circ Q[X> x] d x,~~X\in \X_g.
\end{equation}
where $g\in \mathcal G$ is called the \emph{distortion function} of $\rho^Q_g$, and {$\X_g$ is the set of random variables such that the first integral in \eqref{eq:eqdistdef} is finite. Then, $\rho^Q_g \colon \X_g \to (-\infty,\infty]$ is a well-defined risk measure.} The set $\X_g$ always contains $\X$.
A \emph{$Q$-spectral risk measure} is a $Q$-distortion risk measure with a concave distortion function.
A $Q$-distortion risk measure is always monetary, positively homogeneous and comonotonic-additive.
A $Q$-spectral risk measure is, additionally, coherent.
$\VaR_p^Q$ has distortion function $g(x)=\id_{\{x>1-p\}},~x\in [0,1]$ and $\ES_p^Q$ has distortion function $g(x)=(1/(1-p))\min\{x,1-p\},~x\in [0,1]$.
For the above properties of distortion risk measures, see \citet[Section 4.7]{FS16}.

\section{Axiomatic characterizations}\label{sec:3} 
In this section, we establish axiomatic characterizations of $\Q$-based co\-mo\-no\-to\-nic-additive risk measures as well as $\Q$-based coherent risk measures.
We focus on a finite collection $\Q$   and  the set of bounded random variables, that is, $\Y=\X$. Focusing on the set of bounded random variables is not a limitation for 
 axiomatic characterization results, since properties on $\mathcal Y\supset \X$ imply those on $\X$.

Throughout this section,  $n$ is a  positive integer, and $\uQ=(Q_1,\dots,Q_n)$ is a vector of measures, where $Q_1,\dots,Q_n\in \mathcal P$ are (pre-assigned) probability measures on $(\Omega,\mathcal F)$, and
  $\Q=\{Q_1,\dots,Q_n\}$ is the set of these measures. The dimensionality of $\uQ$ and the cardinality of $\Q$ only differ if some of $Q_1,\dots,Q_n$ are identical.
    If $Q_1,\dots,Q_n$ are distinct, then the mutual singularity of  $\uQ$ is equivalent to that of $\Q$.
Write $\mathbf 0=(0,\dots,0)\in \R^n$ and $\mathbf 1=(1,\dots,1)\in \R^n$.
We say that $P\in\mathcal P$ dominates $\Q$, if $Q\ll P$ for all $Q\in \mathcal Q$, that is, if for all $Q \in \mathcal Q$, $Q$ is absolutely continuous with respect to $P$.
We say that $\uQ$ (or $\mathcal Q$) is \emph{atomless} if $(\Omega,\mathcal F,Q_i)$ is atomless for each $i=1,\dots,n$.  {Recall that  a probability space $(\Omega,\mathcal F,Q)$ is atomless
if there exists a uniform random variable $U$ on $(\Omega,\mathcal F,Q)$.}

\subsection{Novelty and challenges of our framework} \label{sec:41} 

We first illustrate  the distinction of our framework to other results in the literature,
as this distinction is quite subtle mathematically.
The main message is that the interdependence   {among $Q_1,\dots,Q_n$ (e.g.,~whether they are mutually singular or not)} matters  for the risk measure axioms in our framework, whereas this is  irrelevant for   results in the literature on scenario-based functionals (e.g.~\citet{CMMM13,KP16}).

The following simple example illustrates an interesting feature of scenario-based risk measures which is in sharp contrast to classic law-based risk measures. 
\begin{example}\label{ex:mono}
For $P, Q\in \mathcal P$, we define the $\{P,Q\}$-based risk measure $\rho$ as $\rho(X)=2\E^P[X]-\E^Q[X]$, $X\in \X$.
Note that $2P-Q \in \mathcal P$ if and only if  $2P \ge Q$, and under this condition, $\rho$ is the expectation under a probability measure $2P-Q$. If  $2P \ge Q$ fails to hold, then $\rho$ is not monotone. Hence,  $\rho$ is coherent  if and only if $2P \ge Q$.
\end{example}

To understand the implications of Example \ref{ex:mono}, we look at the notion of stochastic order. 
For $Q\in \mathcal P$ and two random variables $X, Y$, we write  
$X\prec_{\rm st}^Q Y$ if $F_{X,Q}(x)\ge F_{Y,Q}(x)$ for all $x\in \R$. We say that $\rho$ is \emph{$\Q$-monotone} if
$\rho(X)\le \rho(Y)$ for all $X,Y$ satisfying $X\prec_{\rm st}^Q Y$ for all $Q\in \Q$.
Since $X\le Y$ implies $X\prec_{\rm st}^Q Y$ for all $Q\in \Q$,
if a  risk measure $\rho$ is $\Q$-monotone,
then it is monotone.  
 As a well-known property, in the case of $\Q=\{P\}$ being a singleton, 
 a $\{P\}$-based risk measure 
  is monotone if and only if it is   $\{P\}$-monotone.
  However,  the risk measure $\rho$ in Example \ref{ex:mono} is generally not $\Q$-monotone (see Proposition \ref{prop:qmono}), but it is monotone and coherent if $2P \ge Q$.
This is in sharp contrast to the case of $\{P\}$-based risk measures.
 
The above observation suggests that the relationship among $P$ and $Q$ matters for the properties of $\rho$. To determine whether $\rho$ is a coherent risk measure, we need to specify two things: first, how $\rho$ incorporates the distributions of the risk under each scenario (i.e.~the mapping $(F_{X,Q})_{Q\in \Q}\mapsto \rho(X)$); second, how these scenarios interact with each other. 
In the case of $\{P\}$-based risk measures,   the mapping $F_{X,P}\mapsto \rho(X)$ solely determines properties of the risk measure,
whereas the choice of the measure $P$ is irrelevant. 
For instance, $\ES_p^P$ and $\E^P$ are always coherent risk measures regardless of the choice of $P$.

The above discussion is related to the popular notion of \emph{consequentialism} in decision theory in the framework of \citet{AA63}. 
In the framework of consequentialism,  two random outcomes $X$ and $Y$ (called Anscombe-Aumman acts) are compared via a preference model which aggregates the tuples of  distributions $(F_{X,Q})_{Q\in \Q}$ and $(F_{Y,Q})_{Q\in \Q}$; e.g.~the well-known robust preference of  \citet{GS89}. In the above framework, axioms are built on the set of tuples of distributions (e.g.~monotonicity is defined with respect to $\Q$-stochastic order) instead of the set of random variables. As a consequence, the set of measures $\Q$  does not play a role in the preference model. This is in sharp contrast to our framework.
For instance, Example \ref{ex:mono} is not allowed as a monotone preference in \citet{GS89}, whereas it is a coherent risk measure in the classic sense of \citet{ADEH99} assuming $2P\ge Q$. For risk management relevance, it is natural to impose economically relevant axioms on the set of random variables. 
Later we shall see that the above discussion plays a significant  role in the axiomatic characterization of scenario-based risk measures.

\subsection{Comonotonic-additive risk measures and Choquet integrals}\label{sec:32}

 As mentioned in Section \ref{sec:22}, the most popular class of risk measures in practice are the ones that are  additive for comonotonic risks.
We choose this class as the starting point to
establish an axiomatic theory of $\Q$-based risk measures.
It is well-known that \trd{law-based} monetary risk measures are closely related to the notion of Choquet integrals; for instance Yaari's dual utility functionals \cite{Y87} and Kusuoka representations \cite{K01} are based on Choquet integrals. 

\begin{definition}
A set function $c:\mathcal F\to \R$, is  \emph{increasing} if $c(A)\le c(B)$ for $A\subset B$, $A,B\in \mathcal F$, it is \emph{standard} if $c$ is increasing and satisfies $c(\varnothing)=0$ and $c(\Omega)=1$, and it is \emph{submodular} if
$$c(A\cup B)+c(A\cap B)\le c(A)+c(B), ~~A,B\in \mathcal F.$$
\end{definition}

\begin{definition}
 For a standard set function $c$ and  $X\in \mathcal X$, the \emph{Choquet integral} $\int X \d c$ is defined as
\begin{equation}
\label{eq:choquet1} \int X\d c =\int_{-\infty}^0 (c(X>x)-1)\d x + \int_0^\infty c(X>x) \d x.
\end{equation}
\end{definition}
The integral $\int X\d c$ in \eqref{eq:choquet1} might also be well-defined on sets larger than the set  $\X$ of bounded random variables.
Generally, depending on different choices of $c$, one may choose different domains for the Choquet integral.
A $Q$-distortion risk measure in \eqref{eq:eqdistdef} is exactly a Choquet integral by choosing $c=g\circ Q$.

Now we are ready to present the characterization for comonotonic-additive $\Q$-based risk measures, which is based on a celebrated result dating back to \citet{S86}.
{Because repeated appearances of some $Q_1,\dots,Q_n$ in $\mathcal Q$ matter for Theorem \ref{th:th2} but not for Definition \ref{def:1}, we will use both
 the vector $\uQ$ and the set $\mathcal Q$.}
\begin{theorem} \label{th:th2}
A risk measure $\rho$ on $\mathcal X$ is monetary (resp.~coherent), co\-mo\-no\-to\-nic-additive and $\Q$-based  if and only if
\begin{equation}\label{eq:qrep} \rho(X)=\int X\d \psi\circ \uQ,~~ X\in \mathcal X\end{equation}
for some   function $\psi:[0,1]^n\to [0,1]$ such that $\psi\circ \uQ$ is standard (resp.~$\psi\circ \uQ$ is standard and submodular).
\end{theorem}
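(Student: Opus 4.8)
The plan is to reduce the statement to the classical comonotonic-additivity theorem of \cite{S86} and then use the $\Q$-based hypothesis to factor the resulting capacity through the vector $\uQ$. For the ``if'' direction, suppose $\rho(X)=\int X\,\d c$ with $c:=\psi\circ\uQ$ standard. Since $c$ is a standard set function, the Choquet integral is automatically monetary, positively homogeneous and comonotonic-additive by the general theory of Choquet integrals (this is exactly how $Q$-distortion risk measures arise, cf.\ \eqref{eq:eqdistdef}); if in addition $c$ is submodular, the Choquet integral is subadditive, so $\rho$ is coherent. The only point requiring the specific form $c=\psi\circ\uQ$ is the $\Q$-based property: if $X\laweq_Q Y$ for every $Q\in\Q$, then $Q_i(X>x)=Q_i(Y>x)$ for all $i=1,\dots,n$ and all $x\in\R$, whence $c(X>x)=\psi(Q_1(X>x),\dots,Q_n(X>x))=\psi(Q_1(Y>x),\dots,Q_n(Y>x))=c(Y>x)$, and plugging this into \eqref{eq:choquet1} gives $\rho(X)=\rho(Y)$.

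For the ``only if'' direction, I would first invoke \cite{S86}: a monetary, comonotonic-additive risk measure on $\X$ admits the representation $\rho(X)=\int X\,\d c$ where $c(A)=\rho(\one_A)$ is a standard set function, and $c$ is submodular precisely when $\rho$ is additionally subadditive (equivalently, coherent, since positive homogeneity comes for free from the Choquet form). It then remains to produce a function $\psi:[0,1]^n\to[0,1]$ with $c=\psi\circ\uQ$.

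The heart of the argument --- and the step I expect to be the main obstacle --- is the well-definedness of such a $\psi$. Consider the map $\uQ:\mathcal F\to[0,1]^n$, $A\mapsto(Q_1(A),\dots,Q_n(A))$, and define $\psi$ on its range $D=\uQ(\mathcal F)$ by $\psi(\uQ(A))=c(A)$. I must check that this does not depend on the choice of representative $A$: if $\uQ(A)=\uQ(B)$, then $Q_i(A)=Q_i(B)$ for each $i$, so $\one_A$ and $\one_B$ are Bernoulli variables with identical parameters under each $Q_i$, i.e.\ $\one_A\laweq_{Q_i}\one_B$ for all $i$, and hence $\one_A\laweq_Q\one_B$ for every $Q\in\Q$. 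The $\Q$-based property then forces $\rho(\one_A)=\rho(\one_B)$, that is $c(A)=c(B)$, so $\psi$ is well defined on $D$. I would extend $\psi$ to all of $[0,1]^n$ by any rule, since the off-range values are never evaluated. By construction $c=\psi\circ\uQ$, and since $c$ is standard (resp.\ standard and submodular), so is $\psi\circ\uQ$, which completes the proof.
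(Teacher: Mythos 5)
Your proposal is correct and follows essentially the same route as the paper's own proof: both directions reduce to the Schmeidler-type representation (Theorems 4.88 and 4.94 of F\"ollmer--Schied), with the converse obtained by setting $c(A)=\rho(\id_A)$ and factoring $c$ through $\uQ$. Your explicit well-definedness check of $\psi$ on the range of $\uQ$ (via $\id_A\laweq_Q\id_B$ whenever $\uQ(A)=\uQ(B)$) is precisely the step the paper compresses into ``$\rho(\id_A)$ is determined by the distribution of $\id_A$ under $Q_1,\dots,Q_n$,'' so nothing is missing.
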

 
\begin{proof}
Summarizing Theorems 4.88 and 4.94 in \citet{FS16}, 
a risk measure $\rho$ on $\mathcal X$ is monetary and comonotonic-additive if and only if $\rho$ is a Choquet integral for some standard set function $c$. In addition, $\rho$ is coherent if and only if $c$ is submodular. 

Suppose first that $\psi\circ\uQ$ is a standard set function. Then, by the result cited above, the right hand side of \eqref{eq:qrep} defines a comonotonic-additive and monetary risk measure $\rho$.  It is coherent if and only if $\psi\circ\uQ$ is additionally submodular.
From the definition of $\int X\d \psi\circ \uQ$, we have
\begin{equation}\label{eq:def-qdist}
\rho(X)= \int_{-\infty}^0 (\psi\circ \uQ [X>x]-1)\d x + \int_0^\infty \psi\circ \uQ[X>x] \d x,
\end{equation}
and hence $\rho$ is $\Q$-based.

Conversely, by the above representation result, $\rho$ can be written as a Choquet integral for some standard set function $c$. If $\rho$ is assumed to be coherent, then $c$ is additionally submodular. By taking $X=\id_A$, $A\in \mathcal F$, we have $c(A)=\rho(\id_A)$. Since $\rho$ is $\mathcal Q$-based, $\rho(\id_A)$ is determined by the distribution of $\id_A$ under $Q_1,\dots,Q_n.$
Let $R_\uQ \subset [0,1]^n$ be the range of $\underline Q$, that is, $R_\uQ=\{(Q_1[A],\dots,Q_n[A]):A\in \mathcal F\}$.
Since $\rho(\id_A)$ only depends on $\uQ[A]$, we can define $\psi:R_\uQ \to \R $ by 
$\psi(x_1,\dots,x_n) = \rho(\id_A)$
where $(x_1,\dots,x_n)=\uQ[A]$.
Thus, $c(A)=\rho(\id_A)=\psi\circ \uQ[A]$ for all $A\in \mathcal F$.
We can trivially extend the domain of $\psi$ to $[0,1]^n$ which does not affect the statement  that $c=\psi\circ \uQ$ is standard.
\end{proof}

We shall refer to a risk measure in \eqref{eq:qrep} as a \emph{$\uQ$-distortion risk measure}, which is, by Theorem \ref{th:th2}, precisely a monetary, co\-mo\-no\-to\-nic-additive and $\Q$-based risk measure.
Coherent $\uQ$-distortion risk measures are referred to as \emph{$\uQ$-spectral risk measures}.
For the  $\uQ$-distortion risk measure $\rho$ in \eqref{eq:qrep}, $\psi$ is called its \emph{$\uQ$-distortion function}, and it is unique on the range of $\uQ$, by noting that
$\rho(\id_A)=\psi\circ \uQ[A]$ for all $A\in \mathcal F$. The reliance on $\uQ$ is essential. For instance, taking $P,Q\in \mathcal P$ and defining $\rho(X)=(1/3)\E^{P}[X]+(2/3)\E^{Q}[X],~X\in \X$, then $\rho$ has a $(P,Q)$-distortion function and a $(Q,P)$-distortion function which are different.
The classes of $\uQ$-distortion and $\uQ$-spectral risk measures will be the building blocks of the theory of $\Q$-based risk measures.

Clearly, if  $n=1$, then the concepts of a $\uQ$-distortion risk measure, a $\uQ$-spectral risk measure and a $\uQ$-distortion function coincide with those defined for a single scenario in Section \ref{sec:22}. In that case, the representation  in \eqref{eq:qrep} reduces to
\begin{equation*}
\rho(X)=\int X\d \psi\circ Q_1,~ X\in \mathcal X\end{equation*}
where $\psi\in \mathcal G$ (and $\psi\in \mathcal G_+$ if $\rho$ is coherent).

The condition that $\psi\circ \uQ$ is standard or $\psi\circ \uQ$ is submodular may not be easy to verify in general, as it involves the joint
properties of $\psi$ and $\uQ$.
Next, we establish simple sufficient conditions based on solely $\psi$. These conditions are necessary and sufficient if $\uQ$ is mutually singular and atomless.

Recall that a function $f:[0,1]^n \to \R$ is called submodular if it holds for all $\mathbf{x},\mathbf{y} \in [0,1]^n$ that $f(\min(\mathbf{x},\mathbf{y})) + f(\max(\mathbf{x},\mathbf{y})) \le f(\mathbf{x}) + f(\mathbf{y})$, where $\min(\mathbf{x},\mathbf{y})$, $\max(\mathbf{x},\mathbf{y})$ denotes the componentwise minimum and maximum, respectively. By \citet[Theorem 3.12.2]{MS02}, 
the function $f$ is componentwise concave and submodular if and only if for all $\mathbf x,\mathbf y,\mathbf w,\mathbf z \in [0,1]^n$ with $\mathbf w \le \mathbf x,\mathbf y \le \mathbf z$ and $\mathbf w + \mathbf z = \mathbf x + \mathbf y$, we have
\begin{equation}\label{eq:2fak}
f(\mathbf x) + f(\mathbf y) \ge f(\mathbf w) + f(\mathbf z).
\end{equation}
In addition, if $f$ is two times continuously differentiable, then \eqref{eq:2fak} holds if and only if the entries of its Hessian are all non-positive. We call $f$ increasing if $\mathbf x \le \mathbf y$ implies $f(\mathbf x) \le f(\mathbf y)$.

\begin{proposition}\label{coro:31}
Let $\psi:[0,1]^n\to [0,1]$ be a function satisfying $\psi(\mathbf{0}) = 0$, $\psi(\mathbf{1}) = 1$.
\begin{enumerate}[(i)]
\item   If $\psi$ is increasing on the range of $\uQ$, then $\psi\circ \uQ$ is standard.
\item  If $\psi$ is increasing, componentwise concave, and submodular, then $\psi\circ\uQ$ is standard and submodular. More precisely, if $\psi$ is increasing and satisfies \eqref{eq:2fak} on the range of $\uQ$, then $\psi\circ\uQ$ is standard and submodular.
 \end{enumerate}
If $\uQ$ is mutually singular and atomless then the range of $\uQ$ is $[0,1]^n$, and the converse of (i) and (ii) are also true.
\end{proposition}
\begin{proof} 
Part (i) is trivial. 
For $A,B\in \mathcal F$ and $Q\in \mathcal Q$, we always have that $Q[A\cup B]+Q[A\cap B]= Q[A]+Q[B]$. Therefore, from \eqref{eq:2fak}, we obtain
$$\psi \circ \uQ [A\cup B]+\psi  \circ \uQ [A\cap B]\le \psi \circ \uQ[A]+\psi \circ \uQ[B]$$
which gives the submodularity of $ \psi \circ \uQ$, thus showing part (ii). 

If $\uQ$ is mutually singular and atomless, the map $\uQ:\mathcal{F}\to [0,1]^n$ is surjective.
Let  $A_1,\dots,A_n\in \mathcal F$ be disjoint sets such that $Q_i[A_i]=1$ for each $i=1,\dots,n$. For the converse of part (i), suppose that $\mathbf x, \mathbf y\in [0,1]^n$ with $x_1 \le y_1$ and $x_2 = y_2,\dots,x_n=y_n$. Let $B \in \mathcal{F}$ with $\mathbf x = (Q_1[B],\dots,Q_n[B])$. As $(A_1,\mathcal{F},Q_1)$ is an atomless probability space, there exists a set $C$ with $(B \cap A_1) \subset C \subset A_1$ and $Q_1[C] = y_1$ (\citet[Theorem 1]{D00}). We have $\mathbf y = (Q_1[C \cup B],\dots,Q_n[C \cup B])$, which yields the claim.
For the converse of part (ii), we show the claim for $n=1$, and the general case follows easily due to the fact that $\uQ$ is mutually singular. Let $x,y,w,z \in \mathbb{R}$ with $w \le x, y\le z$ and $ w+ z = x+y$. Take $B,C \in \mathcal{F}|_{A_1}$ with $Q_1[B] =  x$ and $Q_1[C] = y$. If $Q_1[B \cap C] >  w$, take $B' \subset (B\backslash C)$ with $Q_1[B'] = Q_1[B \cap C] -  w$ and $C' \subset (C\backslash B)$ with $Q_1[C'] = Q_1[B \cap C] - w$. Then, $\bar{C} = (C\backslash C') \cup B'$ fulfills $Q_1[\bar{C}] = y$ and $Q_1[B \cap \bar{C}] = w$. If $Q_1[B \cap C] < w$, take $B' \subset (B \cup C)^c$ with $Q_1[B'] = w - Q_1[B \cap C]$ and  $C' \subset C \cap B$ with $Q_1[C'] = w - Q_1[C \cap B]$. Then, $\bar{C} = (C\backslash C') \cup B'$ fulfills $Q_1[\bar{C}] = y$ and $Q_1[B \cap \bar{C}] = w$. The equation $w + z = x + y= Q_1[B] + Q_1[\bar{C}] = Q_1[B\cap \bar{C}]+ Q_1[B \cup \bar{C}]$, hence $z= Q_1[B \cup \bar{C}]$. Now, the submodularity of  $\psi\circ \uQ$ implies \eqref{eq:2fak}.
\end{proof}

Proposition \ref{coro:31} implies that it is straightforward to design various co\-mo\-no\-to\-nic-additive $\Q$-based risk measures by choosing increasing functions $\psi$.
We remark that, if $\uQ$ is not mutually singular, in order for $\psi\circ \uQ$ to be standard (resp.~submodular), it is generally not necessary for  $\psi$  to be increasing (resp.~componentwise concave and submodular). 
 In Example \ref{ex:mono}, the  distortion function of $\rho$ is $\psi:(s,t)\mapsto 2s-t$, which is not increasing; however $\rho$ is still a spectral risk measure if $2P \ge Q$. The following Proposition shows that in this example, $\rho$ cannot be $\Q$-monotone unless the range of $\uQ$ is degenerate, in the sense that it has empty interior which happens if $P = Q$.
\begin{proposition}\label{prop:qmono}
Let $\rho$ be a $\uQ$-distortion risk measure with $\uQ$-distortion function $\psi$. The risk measure $\rho$ is $\Q$-monotone if and only if $\psi$ is increasing on the range of $\uQ$.
\end{proposition}
\begin{proof}
If $\psi$ is increasing on the range of $\uQ$, the $\Q$-monotonicity of $\rho$ is immediate from \eqref{eq:def-qdist}. Conversely, suppose that $\mathbf x = \uQ[A] \le \mathbf{y} = \uQ[B]$ for some $A,B \in \mathcal{F}$. Then, $\one_A \prec_{\rm st}^Q \one_B$ for all $Q \in \Q$, hence, by $\Q$-monotonicity of $\rho$, we obtain $\psi(\mathbf{x}) =\psi \circ \uQ[A] = \rho(\one_A) \le \rho(\one_B) = \psi \circ \uQ[B] = \psi(\mathbf{y})$.
\end{proof} 
 
We proceed to discussing an integral representation of $\uQ$-distortion risk measures. 
In Section \ref{sec:22}, for a single scenario $Q$, a $Q$-distortion risk measure $\rho^Q_g$ is  defined as
\begin{equation}\label{eq:int-1}
\rho^Q_g(X)=\int_{-\infty}^0 (g\circ Q[X> x]-1 )\d x+\int^{\infty}_0 g\circ Q[X> x] d x,~~X\in \X.
\end{equation}
If $g$ is left-continuous,  $\rho^Q_g$ has a Lebesgue integral formulation  via an argument of integration by parts (\citet[Theorem 6]{DKLT12}), that is,
\begin{equation}\label{eq:int-2}
\rho_g^Q(X)=\int_0^1 \VaR^Q_{p}(X)\d \bar g(p),~~X\in \X,
\end{equation}
where $\bar g(t)=1-g(1-t)$ for $t\in [0,1]$. Note that in this case, $\bar g$ is right-continuous with $g(0)=1-g(1)=0$; thus $\bar g$ is a distribution function on $[0,1]$.
This property is key to the  integral representation in \eqref{eq:int-2}.
We establish an analogous integral formulation for the case of multiple scenarios under a similar assumption.
For a function $\psi:[0,1]^n\to [0,1]$, we define $\bar \psi (\mathbf u)=1-\psi(\mathbf 1-\mathbf u)$, $\mathbf u\in[0,1]^n$.

\begin{proposition}\label{th:minvargen}
\trd{Suppose that $\psi:[0,1]^n\to [0,1]$ is such that  $\bar \psi$ is  a distribution function on $[0,1]^n$.} Let  $\rho_\psi:\X\to \R$ be given by
\begin{equation}\label{eq:int-rep}
\rho_{\psi}(X)= \int_{[0,1]^n}\max\{\VaR_{u_1}^{Q_1}(X),\dots,\VaR_{u_n}^{Q_n}(X)\}\d \bar \psi(u_1,\dots,u_n).
\end{equation}
Then $\rho_\psi(X)$ is a $\uQ$-distortion risk measure with $\uQ$-distortion function $\psi$.
Moreover, if $\bar \psi$ is componentwise convex, then $\rho_{\psi}$ is a  $\uQ$-spectral risk measure.
\end{proposition}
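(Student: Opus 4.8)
The plan is to show that the integral in \eqref{eq:int-rep} is nothing but the Choquet integral $\int X\,\d(\psi\circ\uQ)$ for $\psi = 1-\bar\psi(\mathbf 1-\cdot)$, and then to read off both assertions from Proposition \ref{coro:31}, by translating the hypotheses on $\bar\psi$ into the hypotheses on $\psi$ required there.

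First I would set up a \emph{layer-cake} reduction. For bounded $X$ the map $\mathbf u\mapsto\max_i \VaR^{Q_i}_{u_i}(X)$ is bounded and Borel measurable, since each coordinate map $u_i\mapsto\VaR^{Q_i}_{u_i}(X)=F^{-1}_{X,Q_i}(u_i)$ is increasing. Writing $\mu$ for the probability measure on $[0,1]^n$ with distribution function $\bar\psi$ (so that integration $\d\bar\psi$ is integration $\d\mu$), the tail formula for a bounded measurable integrand gives
\begin{equation*}
\rho_\psi(X)=\int_{-\infty}^0 (\mu(g>x)-1)\,\d x+\int_0^\infty \mu(g>x)\,\d x,\qquad g(\mathbf u)=\max_i \VaR^{Q_i}_{u_i}(X).
\end{equation*}
The crux is the elementary quantile identity $F^{-1}(u)>x\iff u>F(x)$, which yields
\begin{equation*}
\{\mathbf u: g(\mathbf u)>x\}=\{\mathbf u:\ u_i>F_{X,Q_i}(x)\text{ for some }i\}=[0,1]^n\setminus \prod_{i=1}^n [0,F_{X,Q_i}(x)].
\end{equation*}
Hence $\mu(g>x)=1-\bar\psi(F_{X,Q_1}(x),\dots,F_{X,Q_n}(x))=\psi(Q_1(X>x),\dots,Q_n(X>x))=(\psi\circ\uQ)(X>x)$, using $1-F_{X,Q_i}(x)=Q_i(X>x)$ and the definition of $\psi$. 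Substituting this back recovers exactly the Choquet integral \eqref{eq:choquet1}, that is, $\rho_\psi(X)=\int X\,\d(\psi\circ\uQ)$.

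It then remains to verify that $\psi$ satisfies the hypotheses of Proposition \ref{coro:31}. Because $\bar\psi$ is a distribution function, $\bar\psi(\mathbf 0)=0$ and $\bar\psi(\mathbf 1)=1$ give the normalisation $\psi(\mathbf 0)=1-\psi(\mathbf 1)=0$, and $\bar\psi$ componentwise increasing makes $\psi$ componentwise increasing through the reflection $\mathbf u\mapsto\mathbf 1-\mathbf u$. Proposition \ref{coro:31}(i) then shows $\rho_\psi$ is a $\uQ$-distortion risk measure, with distortion function $\psi$ because $\rho_\psi(\id_A)=(\psi\circ\uQ)(A)=\psi(\uQ(A))$ for every $A\in\mathcal F$. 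For the second assertion I would carry the same reflection through the two remaining properties: componentwise convexity of $\bar\psi$ is preserved by the affine change $u_i\mapsto 1-u_i$ and so transfers to componentwise concavity of $\psi$; and writing $\mathbf x'=\mathbf 1-\mathbf x$, $\mathbf y'=\mathbf 1-\mathbf y$ and noting $\mathbf 1-(\mathbf x\vee\mathbf y)=\mathbf x'\wedge\mathbf y'$, a one-line computation shows submodularity of $\psi$ is \emph{equivalent} to supermodularity of $\bar\psi$. Proposition \ref{coro:31}(ii) then delivers the $\uQ$-spectral conclusion.

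The step I expect to be the main obstacle is supplying the submodularity of $\psi$: it does not follow from componentwise convexity of $\bar\psi$, but from the fact that $\bar\psi$, being a genuine multivariate distribution function, is \emph{supermodular} — its rectangle increments (the $\mu$-masses of boxes) are nonnegative, which is exactly the $2$-increasing property characterising supermodularity. I would state and use this as a standard property of distribution functions. The remaining care is measure-theoretic bookkeeping in the layer-cake identity, namely the strict-versus-weak inequalities at quantile jumps, which only affect a set of levels $x$ of Lebesgue measure zero, together with the boundary convention ensuring $\bar\psi(\mathbf 0)=0$, mirroring the requirement $\bar g(0)=0$ behind the single-scenario formula \eqref{eq:int-2}.
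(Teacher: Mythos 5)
Your proof is correct and follows essentially the same route as the paper's: both reduce \eqref{eq:int-rep} to the Choquet integral $\int X\,\d(\psi\circ\uQ)$ via the quantile identity $F^{-1}(u)>x\iff u>F(x)$ and then invoke Theorem \ref{th:th2} and Proposition \ref{coro:31}, using that a distribution function is componentwise increasing and supermodular (hence $\psi$ is increasing and submodular), with componentwise concavity of $\psi$ supplied by the convexity hypothesis on $\bar\psi$. The only cosmetic difference is that the paper phrases the layer-cake computation probabilistically, via $Y=\max_i F^{-1}_{X,Q_i}(U_i)$ with $(U_1,\dots,U_n)\sim_\p\bar\psi$ and the cdf of $Y$, whereas you integrate directly against the measure induced by $\bar\psi$.
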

\begin{proof}
Let 
$$
Y=\max\{F_{X,Q_1}^{-1}(U_1),\dots,F_{X,Q_n}^{-1}(U_n)\}=\max\{\VaR_{U_1}^{Q_1}(X),\dots,\VaR_{U_n}^{Q_n}(X)\},
$$
where $(U_1,\dots,U_n)\sim_{\p} \bar \psi$. For almost every $x\in \R$, we have
\begin{align*}
\p[Y\le x]&=\p[F_{X,Q_1}^{-1}(U_1)\le x,\dots,F_{X,Q_n}^{-1}(U_n)\le x]\\
&=\p[U_1\le F_{X,Q_1}(x),\dots, U_n\le F_{X,Q_n}(x)]\\&=\bar \psi ( Q_1[X\le x],\dots,Q_n[X\le x]) =1-\psi \circ \uQ [X>x].
\end{align*}
It follows that
\begin{align*}
 \rho_{\psi}(X)&=\E^\p[Y]=\int_{-\infty}^0 (\p[Y>x]-1)\d x+\int_0^\infty \p[Y>x]\d x\\
&=\int_{-\infty}^0 \left(\psi \circ  \uQ [X> x]-1\right)\d x+\int_0^\infty  \psi \circ \uQ [X> x] \d x = \int X\d \psi \circ \uQ .
\end{align*}
Note that  any distribution function $\bar \psi$ is increasing and supermodular. Hence, $\psi$ is increasing and submodular, and by Theorem \ref{th:th2} and Proposition \ref{coro:31}, we obtain the desired results.
\end{proof}

Proposition \ref{th:minvargen} provides a convenient way to construct various $\uQ$-distortion risk measures. For instance, one may choose $\bar \psi$ as an $n$-copula (\citet{J14}). A direct consequence of Proposition \ref{th:minvargen} is that any $\uQ$-distortion risk measure with $\uQ$-distortion function $\psi$ has a representation \eqref{eq:int-rep} if
$\bar \psi$ is a distribution function. 

For a single scenario $Q$, the distortion function $g$ of a $Q$-spectral risk measure $\rho_g^Q$ in \eqref{eq:int-1} is  concave,  implying that $\bar g$ is automatically a distribution function, and hence $\rho_g^Q$ always admits a representation in \eqref{eq:int-2}.
This property does not  carry through to the case of $\uQ$-distortion risk measures in general. More precisely,
the $\uQ$-distortion function of a $\uQ$-spectral risk measure is not necessarily always a distribution function,  because  all distribution functions on $[0,1]^n$ are supermodular but not vice versa. As a consequence, not all $\uQ$-spectral risk measure have representation \eqref{eq:int-rep}. This is in sharp contrast to the case of a single scenario.

\subsection{Coherent risk measures}
\label{sec:separable}

As a  classic result in the theory of risk measures, the Kusuoka representation \cite{K01}  states that, \trd{in an atomless probability space,} any single-scenario-based coherent risk measure admits a representation  as the supremum over a collection of spectral risk measures, which are mixtures of ES.

One naturally wonders whether a similar result holds true for $\Q$-based coherent risk measures.  
First, it is straightforward to notice  that a supremum over a collection of $\uQ$-spectral risk measure is always a $\Q$-based coherent  risk measure.
For the converse direction, we shall show that a $\Q$-based coherent risk measure admits a representation as the supremum of a collection of mixtures of $Q$-ES  for $Q\in \Q$, \trd{but this needs some non-trivial condition}.
More precisely,
a \emph{$\mathcal Q$-mixture of  ES}  is a risk measure $\hat \rho $
defined by
\begin{equation}\label{eq:maxes-rep2}
\hat \rho (X)=\sum_{i=1}^n w_i \int_0^1 \ES^{Q_i}_p(X)\d h_i(p),~X\in \X,
\end{equation}
 for some $\mathbf w=(w_1,\dots,w_n)\in   [0,1]^n$ with $\sum_{i=1}^n w_i=1$ and
 distribution functions  $  h_1,\dots,h_n $ on $[0,1]$.
  Clearly,  $\hat \rho $  is a $\uQ$-spectral risk measure, as each of the $Q$-ES  is a $\uQ$-spectral risk measure. Its $\uQ$-distortion function is given by $\psi(\mathbf{x}) = \sum_{i=1}^n w_i g_i(x_i),  ~ \mathbf{x} \in [0,1]^n,$
  where  for each $i=1,\dots,n$, $g_i$ is the distortion function of $\int_0^1 \ES^{Q_i}_p(\cdot) \d h_i(p)$, and thus, 
 $$ \psi(\mathbf{x}) = \sum_{i=1}^n w_i\left( 1 - h_i(1-x_i) + x_i\int_0^{1-x_i} \frac{1}{1-p}\d h_i(p)\right), \quad \mathbf{x} \in [0,1]^n.$$
  We denote by $\Phi_{\Q}$ the set of all $\mathcal Q$-mixtures of ES in \eqref{eq:maxes-rep2}. 
In the next theorem, we establish  that, if $\uQ$ is mutually singular \trd{and atomless}, then any $\Q$-based coherent risk measure $\rho$ can be written as 
a supremum of $\Q$-mixtures of ES, namely,
\begin{equation}\label{eq:maxes-rep}
\rho(X)= \sup_{\hat  \rho  \in  \Phi} \hat \rho (X),~~X\in \X,
\end{equation}
for some set $\Phi\subset \Phi_{\Q}$. 
Examples of risk measures of  type \eqref{eq:maxes-rep} will be discussed in Section \ref{sec:messec}.
\begin{theorem}\label{th:th1-prime}
\begin{enumerate}[(i)]
\item If $\rho:\X\to \R$  is the supremum of some  $\uQ$-spectral risk measures, then it is a  $\Q$-based coherent risk measure.
\item If $\uQ$ is mutually singular \trd{and atomless}, then a risk measure on $\X$ is a  $\Q$-based coherent risk measure if and only if it is a supremum of $\Q$-mixtures of ES   as  in \eqref{eq:maxes-rep}.
\end{enumerate}
 \end{theorem}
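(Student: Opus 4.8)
The plan is to treat the two directions quite differently: part (i) is a soft stability argument, while part (ii) is a Kusuoka-type representation whose real content lies in a single rearrangement step enabled by mutual singularity. For part (i), I would use Theorem \ref{th:th2} to note that every $\uQ$-spectral risk measure is coherent and $\Q$-based, and then observe that both properties survive suprema: if $\rho=\sup_j\rho_j$ with each $\rho_j$ $\uQ$-spectral, monotonicity, cash-invariance, positive homogeneity and subadditivity pass to $\rho$ via $\sup_j(a_j+b_j)\le\sup_j a_j+\sup_j b_j$ and $\sup_j(\lambda a_j)=\lambda\sup_j a_j$ for $\lambda>0$, while $X\laweq_Q Y$ for all $Q\in\Q$ gives $\rho_j(X)=\rho_j(Y)$ for every $j$, hence $\rho(X)=\rho(Y)$. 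Thus $\rho$ is $\Q$-based and coherent.

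For part (ii), fix witnesses $A_1,\dots,A_n\in\mathcal F$ of mutual singularity ($Q_i(A_i)=1$, pairwise disjoint), set $A_0=\Omega\setminus\bigcup_i A_i$, and assume each $(A_i,\mathcal F|_{A_i},Q_i)$ is atomless so that prescribed marginals and rearrangements can be realized (the standard regularity for Kusuoka-type results). First I would invoke the dual representation of coherent risk measures with the Fatou property, as recalled in the Remark after Theorem \ref{th:th31}, to write $\rho(X)=\sup_{P\in\mathcal A}\E^P[X]$ for a set $\mathcal A$ of probability measures. Two reductions localize these measures: since $\id_{A_0}\laweq_{Q_i}0$ for all $i$ we get $\rho(\id_{A_0})=\rho(0)=0$, so $0=\sup_P P(A_0)$ forces $P(A_0)=0$; and for $B\subset A_i$ with $Q_i(B)=0$ the variable $t\,\id_B$ satisfies $t\,\id_B\laweq_{Q_j}0$ for all $j$, whence $\rho(t\,\id_B)=0$ and $\sup_P t\,P(B)=0$ forces $P(B)=0$. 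Consequently every $P\in\mathcal A$ splits as $P=\sum_{i=1}^n w_i^P P_i^P$ with $w_i^P=P(A_i)$, $\sum_i w_i^P=1$, and $P_i^P$ concentrated on $A_i$ with $P_i^P\ll Q_i$; write $Z_i^P=\d P_i^P/\d Q_i$.

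The heart of the argument is to both dominate and recover $\E^P[\cdot]$ by one mixture of $Q$-ES. Letting $q_{Z_i^P}$ be the $Q_i$-quantile function of $Z_i^P$ (increasing, nonnegative, integrating to $1$), the spectral risk measure $\int_0^1\VaR_u^{Q_i}(\cdot)\,q_{Z_i^P}(u)\,\d u$ is a mixture of $Q_i$-ES, so
$$\rho_{\mathbf w^P}(X):=\sum_{i=1}^n w_i^P\int_0^1\VaR_u^{Q_i}(X)\,q_{Z_i^P}(u)\,\d u$$
has the form \eqref{eq:maxes-rep2}. The Hardy--Littlewood inequality applied in each scenario gives $\E^{Q_i}[X Z_i^P]\le\int_0^1\VaR_u^{Q_i}(X)\,q_{Z_i^P}(u)\,\d u$, hence $\E^P[X]\le\rho_{\mathbf w^P}(X)$ and $\rho(X)\le\sup_P\rho_{\mathbf w^P}(X)$. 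For the reverse inequality I would use atomlessness to build $Y\in\X$ whose restriction to each $A_i$ is a $Q_i$-preserving rearrangement of $X|_{A_i}$ chosen comonotonic with $Z_i^P$; because the $A_i$ are disjoint these rearrangements are carried out simultaneously and independently. Then $Y\laweq_{Q_i}X$ for every $i$, so $\rho(Y)=\rho(X)$ by the $\Q$-based property, while comonotonicity turns each Hardy--Littlewood bound into an equality, yielding $\rho_{\mathbf w^P}(X)=\sum_i w_i^P\,\E^{Q_i}[Y Z_i^P]=\E^P[Y]\le\rho(Y)=\rho(X)$. Combining the two inequalities gives $\rho=\sup_{P\in\mathcal A}\rho_{\mathbf w^P}$, which is the representation \eqref{eq:maxes-rep}, taking $\mathcal W$ to index $\mathcal A$ (the bookkeeping that attaches the distortions $h_i^{\mathbf w}$ to the weight vectors $\mathbf w^P$ is routine).

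The hard part is exactly this reverse inequality: a priori the scenariowise optimal densities come from different dual elements and cannot be read off a single $P$, and it is mutual singularity that dissolves this coupling, letting one realize all the comonotone pairings through one random variable $Y$ of unchanged $\rho$-value. The secondary technical issues I would be careful about are the availability of the dual representation (the Fatou property) and the atomlessness used both in the localization/absolute-continuity reductions and in attaining the comonotone rearrangements; and I would expect mutual singularity to be genuinely indispensable here, paralleling its role in the converse statements of Proposition \ref{coro:31}.
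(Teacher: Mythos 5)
Your part (i) and the core of your part (ii) coincide with the paper's own proof: the paper also passes to a dual representation $\rho(X)=\sup_{P\in\mathcal R}\E^P[X]$, exchanges the supremum over $P$ with the supremum over the equidistribution class $L_X(\Q)=\{Y\in\X: Y\laweq_Q X \text{ for all } Q\in \Q\}$, and evaluates $\sup_{Y\in L_X(\Q)}\E^P[Y]$ by exactly your mechanism (Lemma \ref{lem:42}): decompose $P$ over the disjoint supports $A_1,\dots,A_n$, bound scenariowise via the Fr\'echet--Hoeffding (Hardy--Littlewood) inequality, attain the bound with the patched comonotone rearrangement $Y=\sum_{i=1}^n F^{-1}_{X,Q_i}(U_i)\id_{A_i}\in L_X(\Q)$, and convert the resulting $Q_i$-spectral functionals into mixtures of $Q_i$-ES via Kusuoka's theorem. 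Your ``heart of the argument'' paragraph is precisely the paper's Lemma \ref{lem:42}, merely organized as a two-sided bound on each $\E^P$ instead of an exchange of suprema; your explicit localization ($P(A_0)=0$ and $P|_{A_i}\ll Q_i$, derived from the $\Q$-based property) is obtained in the paper for free, since its Fatou property is formulated with respect to $\overline Q=\frac 1n\sum_{i=1}^n Q_i$, so the dual set already consists of $P\ll\overline Q$.

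The genuine gap is the step you defer as a ``secondary technical issue'': the availability of the dual representation itself. A coherent risk measure on $\X$ does not in general admit a representation by countably additive probability measures; that requires the Fatou property, and for a $\Q$-based (not $\overline Q$-law-invariant) coherent $\rho$ this is not an off-the-shelf fact. The paper devotes Lemma \ref{lem:44} to proving exactly this: if $\uQ$ is mutually singular, every $\Q$-based, quasi-convex, $||\cdot||_\Q$-continuous functional satisfies the $\Q$-Fatou property. Its proof adapts the Jouini--Schachermayer--Touzi/Delbaen argument for law-invariant functionals: Delbaen's averaging lemma produces, under each $Q_i$ separately, random variables $Z^i_{k,j}$ distributed as $X_k$ whose Ces\`aro averages converge to $X$ in $||\cdot||_{\{Q_i\}}$, and mutual singularity is then used to glue these into single random variables $Y_{k,j}=\sum_{i=1}^n Z^i_{k,j}\id_{A_i}$ distributed as $X_k$ under \emph{every} $Q_i$ simultaneously, whence $\rho(Y_{k,j})=\rho(X_k)$ and quasi-convexity plus norm-continuity yield $\rho(X)\le\liminf_{k\to\infty}\rho(X_k)$. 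So mutual singularity enters twice --- once here and once in your rearrangement step --- and without this lemma your opening display $\rho=\sup_{P\in\mathcal A}\E^P$ is unjustified, so the proof of part (ii) does not get off the ground. (A minor point: your explicit atomlessness assumption on each $(A_i,\mathcal F|_{A_i},Q_i)$ is consistent with the paper, which needs it implicitly through Lemma A.32 of F\"ollmer--Schied to construct $U_i$ with $Z=F^{-1}_{Z,Q_i}(U_i)$ $Q_i$-a.s.)
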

 
 Before proving Theorem \ref{th:th1-prime}, we establish some auxiliary results, which might be of independent interest. 
First, we discuss the Fatou property (\citet{D00, D12}), which we shall define with respect to a scenario dominating $\Q$.
Such a dominating scenario may be chosen as $Q^*=(1/n)\sum_{i=1}^n Q_i$.
Formally, a risk measure $\rho$ is said to satisfy the \emph{$\Q$-Fatou property}
if, for a uniformly bounded sequence $X_1,X_2,\dots\in \X$, the convergence $X_k \bQto X\in \X$  implies  $\rho(X)\le \liminf_{k\to\infty} \rho(X_k)$.
We also introduce a norm $\Vert\cdot\Vert_\Q$ on the $Q^*$-equivalence classes of $\X$,
defined as $\Vert\cdot\Vert_\Q=\sup\{x>0: Q^*[|X|>x]>0\}$, which is the usual $L^\infty$ norm for essentially bounded random variables in $(\Omega,\mathcal F,Q^*)$.
Note that in the definitions of the $\Q$-Fatou property and the norm $\Vert\cdot\Vert_\Q$, the dominating measure $Q^*$ can be chosen equivalently as any probability measure dominating $\Q$.
It is straightforward to check that all $\Q$-based monetary risk measures are continuous with respect to $\Vert\cdot\Vert_\Q$.
A quasi-convex risk measure $\rho$ is one that satisfies $\rho(\lambda X+(1-\lambda )Y)\le \max \{\rho(X),\rho(Y)\}$ for all $\lambda \in [0,1]$ and $X,Y\in \X$.

\begin{lemma}\label{lem:44}
If $\uQ$ is mutually singular, then a $\Q$-based quasi-convex risk measure that is continuous with respect to $\Vert\cdot\Vert_{\Q}$
satisfies the $\Q$-Fatou property.
\end{lemma}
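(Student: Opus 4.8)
The plan is to argue through the sublevel sets of $\rho$ and to show that they are closed under the convergence appearing in the $\Q$-Fatou property. First I would record that quasi-convexity together with $\|\cdot\|_\Q$-continuity makes each sublevel set $C_c=\{X\in\X:\rho(X)\le c\}$ a convex, $\|\cdot\|_\Q$-closed subset of $L^\infty(\overline Q)$, and that the $\Q$-based property makes $C_c$ invariant under the $\Q$-equivalence: if $X\in C_c$ and $Y\laweq_{Q_i}X$ for all $i$, then $\rho(Y)=\rho(X)\le c$, so $Y\in C_c$. Since the $\Q$-Fatou property is exactly the statement that every $C_c$ is sequentially closed under uniformly bounded $\overline Q$-convergence, it suffices to prove this closedness for convex, $\|\cdot\|_\Q$-closed, $\Q$-invariant sets.

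Next I would convert the relevant convergence into weak-$*$ convergence. If $X_k\bQto X$ with $\sup_k\|X_k\|_\Q<\infty$, then along a subsequence $X_k\to X$ $\overline Q$-a.s., and dominated convergence gives $\E^{\overline Q}[X_kg]\to\E^{\overline Q}[Xg]$ for every $g\in L^1(\overline Q)$; hence $X_k\to X$ in $\sigma(L^\infty(\overline Q),L^1(\overline Q))$. Granting that each $C_c$ is weak-$*$ closed, the lemma follows: choosing a subsequence realizing $\ell:=\liminf_k\rho(X_k)$ and any $c>\ell$, the $X_k$ eventually lie in the weak-$*$ closed set $C_c$, so its weak-$*$ limit $X$ lies in $C_c$ and $\rho(X)\le c$; letting $c\downarrow\ell$ yields $\rho(X)\le\liminf_k\rho(X_k)$.

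It remains to show that a convex, $\|\cdot\|_\Q$-closed, $\Q$-invariant set $C$ is weak-$*$ closed, which is the heart of the matter. Here I would follow the Jouini--Schachermayer--Touzi strategy for law-invariant convex sets: by the Krein--\v{S}mulian theorem it is enough to prove that $C\cap\{\|X\|_\Q\le r\}$ is weak-$*$ closed for each $r$, and since $L^1(\overline Q)$ is separable this ball is weak-$*$ metrizable, so only weak-$*$ convergent (equivalently, bounded $\overline Q$-a.s.\ convergent) sequences need to be treated. This is precisely where mutual singularity enters: with disjoint $A_i$ satisfying $Q_i(A_i)=1$, an element of $\X$ is determined (up to $\overline Q$-null sets) by its restrictions $(X|_{A_1},\dots,X|_{A_n})$, and $\Q$-invariance becomes invariance under $Q_i$-measure-preserving rearrangements performed \emph{separately} on each $(A_i,Q_i)$. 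Exploiting the atomless structure of the $(A_i,Q_i)$, one aims to symmetrise a bounded $\overline Q$-a.s.\ convergent sequence by averaging over such coordinatewise rearrangements: every rearrangement of a given $X_k$ keeps $\rho$ unchanged, hence stays in $C$, and by convexity of $C$ the averages remain in $C$; if these symmetrised averages can be made to converge to the limit in $\|\cdot\|_\Q$, then $\|\cdot\|_\Q$-closedness of $C$ delivers the limit.

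The main obstacle is exactly this last step. The hypotheses provide continuity only in the strong norm $\|\cdot\|_\Q$, whereas the $\Q$-Fatou convergence is far weaker, and a convex norm-closed set is in general not closed for the weaker convergence; what must rescue the argument is the interplay of $\Q$-invariance, mutual singularity and atomlessness, which allows one to replace a sequence exhibiting persistent large deviations on vanishing sets by symmetrised averages that converge in norm without increasing $\rho$. Producing genuine $\|\cdot\|_\Q$-convergence of the coordinatewise symmetrisations, rather than mere $L^p$-convergence, is the delicate point, and it is precisely where the mutual singularity of $\uQ$ --- emphasized as indispensable in the Remark following Theorem \ref{th:th1-prime} --- is needed.
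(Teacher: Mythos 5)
Your reduction is correct as far as it goes: the sublevel sets $C_c=\{X\in\X:\rho(X)\le c\}$ are indeed convex, $||\cdot||_\Q$-closed and invariant under $\Q$-equivalence, and the $\Q$-Fatou property is equivalent to their sequential closedness under uniformly bounded $\overline Q$-convergence; your instinct that one should average over rearrangements performed separately on the singular supports $A_i$ is also the right mechanism. But the proposal stops exactly where the proof has to start: you write that ``if these symmetrised averages can be made to converge to the limit in $||\cdot||_\Q$'' the claim follows, and then flag this as the delicate point without establishing it. That step \emph{is} the lemma; everything before it is a reformulation of the problem. There is also a technical flaw in your weak-$*$ detour: the metrizability of the balls $\{||X||_\Q\le r\}$ in $\sigma(L^\infty(\overline Q),L^1(\overline Q))$ requires $L^1(\overline Q)$ to be separable, which is not available on the general measurable space $(\Omega,\mathcal F)$ of the paper, so the Krein--\v{S}mulian/JST reduction does not go through verbatim.

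The missing ingredient is a quantitative averaging result, which the paper imports from Delbaen (2012): Lemma 11 there yields, for a uniformly bounded sequence with $X_k\bQto X$, integers $N_k$ and random variables $Z^i_{k,1},\dots,Z^i_{k,N_k}$, each with the law of $X_k$ under $Q_i$, such that $\frac{1}{N_k}\sum_{j=1}^{N_k}Z^i_{k,j}\to X$ in $||\cdot||_{\{Q_i\}}$, and by Remark 40 of the same reference the $N_k$ can be chosen independently of $i=1,\dots,n$. Mutual singularity is then used to \emph{glue} rather than merely to symmetrise: with disjoint $A_1,\dots,A_n$ satisfying $Q_i(A_i)=1$, one sets $Y_{k,j}=\sum_{i=1}^n Z^i_{k,j}\id_{A_i}$, so that $Y_{k,j}\laweq_{Q_i}X_k$ for every $i$, whence $\rho(Y_{k,j})=\rho(X_k)$ because $\rho$ is $\Q$-based, while $\frac{1}{N_k}\sum_{j=1}^{N_k}Y_{k,j}\to X$ in $||\cdot||_\Q$. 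Iterated quasi-convexity gives $\rho\bigl(\frac{1}{N_k}\sum_{j=1}^{N_k}Y_{k,j}\bigr)\le\max_{j}\rho(Y_{k,j})=\rho(X_k)$, and $||\cdot||_\Q$-continuity then delivers $\rho(X)\le\liminf_{k}\rho(X_k)$ directly, with no appeal to weak-$*$ closedness of sublevel sets at all. To complete your route you would have to either prove the norm convergence of your coordinatewise symmetrisations (in effect reproving Delbaen's Lemma 11) or cite it; as written, the proposal has a genuine gap at its central step.
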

\begin{proof}
Write  $Q^*=(1/n)\sum_{i=1}^n Q_i$, and note that  $X_k \bQto X\in \X$  implies
$X_k\Qito X$ for each $i=1,\dots,n$.
We shall show the lemma in a similar way to \citet[Theorem 30]{D12}, which states that a $\{Q^*\}$-based, $\Vert\cdot\Vert_{\{Q^*\}}$-continuous and quasi-convex functional satisfies the $\{Q^*\}$-Fatou property (first shown by \citet{JST06} with a minor extra condition).
A $\Q$-based risk measure is not necessarily $\{Q^*\}$-based, and hence the above result does not directly apply.
Nevertheless, we shall utilize \cite[Lemma 11]{D12}, which gives that for each $i=1,\dots,n$, $k\in \N$, there exist
a natural number $N_{k}$ and random variables
$Z^i_{k,1},Z^i_{k,2},\dots,Z^i_{k,N_{k}}$ having the same distribution as $X_k$ under $Q_i$,
such that $$\lim_{k\to \infty}\frac{1}{N_{k}}\sum_{j=1}^{N_{k}} Z^i_{k,j}=X \mbox{~~  in } \Vert\cdot\Vert_{\{Q_i\}}.$$
The numbers  $N_k$ can be chosen independently of $i$, as  explained by \citet[Remark 40]{D12}.
For $k\in \N$ and $j=1,\dots,N_k$,
let $Y_{k,j}= \sum_{i=1}^n Z_{k,j}^i \id_{A_i} $
 where $A_1,\dots,A_n\in \mathcal F$ are disjoint sets such that $Q_i[A_i]=1$ for $i=1,\dots,n$.
 It is clear that for each choice of $(i,j,k)$, $Y_{k,j}$ has the same distribution as $X_k$  under  $Q_i$, and
 $$\lim_{k\to \infty}\frac{1}{N_{k}}\sum_{j=1}^{N_{k}} Y_{k,j} =X \mbox{~~  in }  \Vert\cdot\Vert_\Q.$$
 Therefore, $\rho(Y_{k,j})=\rho(X_k)$. Finally, as  $\rho$ is $\Vert\cdot\Vert_\Q$-continuous, quasi-convex and $\Q$-based, we have
 $$
 \rho(X)=\lim_{k\to \infty}\rho\left(\frac{1}{N_{k}}\sum_{j=1}^{N_{k}} Y_{k,j}\right)\le \liminf_{k\to \infty} \max_{j=1,\dots,N_k}\rho(Y_{k,j})=\liminf_{k\to \infty} \rho(X_k).
 $$
 Thus, $\rho$ satisfies the $\Q$-Fatou property.
\end{proof}

As a direct consequence of  Lemma \ref{lem:44}, if $\Q$ is mutually singular, then any $\Q$-based coherent risk measure, such as a $\uQ$-spectral risk measure, satisfies the $\Q$-Fatou property. 
Next, we present a lemma which serves as a building block for the proof of Theorem \ref{th:th1-prime}. For $X\in \X$,  let $$L_X(\Q)=\{Y\in \X:Y \laweq_Q X ~\mbox{for all}~Q\in\Q\}.$$
That is, $L_X(\Q)$ is the set of all random variables identically distributed as $X$ under each measure in $\Q$.
Clearly $X\in L_X(\Q)$ and hence $L_X(\Q)$ is not empty.
\begin{lemma}\label{lem:42}
Suppose that $\uQ$ is mutually singular \trd{and atomless}, and the probability measure $P\ll(1/n) \sum_{i=1}^n Q_i$.  The functional $\rho:\X\to \R$ with $\rho(X)=\sup_{Y\in L_X(\Q)}\E^P[Y]$ is
a $\Q$-mixture of   ES.
\end{lemma}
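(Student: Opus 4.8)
The plan is to use mutual singularity to decouple the optimization over $L_X(\Q)$ into $n$ independent single-scenario optimizations, each of which I identify as a spectral risk measure under $Q_i$, and then to rewrite each spectral risk measure as an ES-mixture via the usual Kusuoka-type argument. First I would fix disjoint sets $A_1,\dots,A_n\in\mathcal F$ with $Q_i(A_i)=1$, as provided by mutual singularity. Since $\overline Q=\frac1n\sum_{i=1}^n Q_i$ satisfies $\overline Q(\cup_i A_i)=1$ and $P\ll\overline Q$, we also have $P(\cup_i A_i)=1$. Setting $w_i=P(A_i)$ and, when $w_i>0$, $P_i=P(\,\cdot\mid A_i)$, I obtain $\mathbf w=(w_1,\dots,w_n)\in\mathcal W_0$ and $\E^P[Y]=\sum_{i=1}^n w_i\E^{P_i}[Y]$ for every $Y\in\X$. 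Moreover, on $A_i$ one has $\overline Q=\frac1n Q_i$, so $P_i\ll Q_i$ there.

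The crux is the decoupling. Because the $A_i$ are disjoint with $Q_i$ and $P_i$ both concentrated on $A_i$, both the constraint $Y\laweq_{Q_i}X$ and the value $\E^{P_i}[Y]$ depend only on the restriction $Y|_{A_i}$, and these restrictions may be prescribed independently of one another (the values of $Y$ off $\cup_i A_i$ are irrelevant for all the laws and integrals involved). Hence $L_X(\Q)$ is, up to these irrelevant values, a product set, the objective $\sum_i w_i\E^{P_i}[Y]$ is separable across the factors, and the supremum distributes:
$$\rho(X)=\sum_{i=1}^n w_i\,\sup\big\{\E^{P_i}[Y]:Y\laweq_{Q_i}X\big\}.$$
I expect this to be the main obstacle, as it is precisely where mutual singularity is indispensable and where one must argue rigorously that a coupled supremum factorizes; the surrounding parts are comparatively standard.

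Next I would evaluate each inner supremum. On the atomless space $(A_i,\mathcal F|_{A_i},Q_i)$ one has the density $Z_i=\d P_i/\d Q_i\ge 0$ with $\E^{Q_i}[Z_i]=1$, so $\E^{P_i}[Y]=\E^{Q_i}[YZ_i]$. By the Hardy--Littlewood rearrangement inequality,
$$\sup\big\{\E^{Q_i}[YZ_i]:Y\laweq_{Q_i}X\big\}=\int_0^1 F_{X,Q_i}^{-1}(u)\,G_i^{-1}(u)\,\d u,$$
where $G_i$ is the $Q_i$-law of $Z_i$. This is a spectral risk measure of $X$ under $Q_i$ with increasing spectrum $\phi_i=G_i^{-1}\ge 0$ satisfying $\int_0^1\phi_i(u)\,\d u=\E^{Q_i}[Z_i]=1$.

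Finally I would invoke the standard representation of an increasing unit-mass spectrum as an ES-mixture. Letting $\nu_i$ be the Lebesgue--Stieltjes measure of $\phi_i$ on $[0,1)$ (so $\phi_i(u)=\nu_i([0,u))$) and setting $\d\mu_i(p)=(1-p)\,\d\nu_i(p)$, one checks that $\mu_i$ is a probability measure since $\int_{[0,1)}(1-p)\,\d\nu_i(p)=\int_0^1\phi_i(u)\,\d u=1$; writing $h_i$ for its distribution function and using Fubini together with $\int_{[0,u)}\frac{1}{1-p}\,\d\mu_i(p)=\nu_i([0,u))=\phi_i(u)$ gives $\int_0^1\ES_p^{Q_i}(X)\,\d h_i(p)=\int_0^1 F_{X,Q_i}^{-1}(u)\phi_i(u)\,\d u$. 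Combining this with the decoupling yields
$$\rho(X)=\sum_{i=1}^n w_i\int_0^1\ES_p^{Q_i}(X)\,\d h_i(p),$$
which is exactly the form \eqref{eq:maxes-rep2} of a mixture of $Q$-ES, completing the argument.
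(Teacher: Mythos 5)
Your proof is correct and follows essentially the same route as the paper's: mutual singularity supplies the disjoint carriers $A_i$, the supremum is evaluated scenario-by-scenario via the Hardy--Littlewood/Fr\'echet--Hoeffding rearrangement bound together with the glued comonotonic maximizer $\sum_i F^{-1}_{X,Q_i}(U_i)\id_{A_i}$, and each resulting $Q_i$-spectral functional is rewritten as a mixture of $Q_i$-ES. The only organizational differences are that you factorize the supremum explicitly through the conditional measures $P_i=P(\cdot\mid A_i)$ where the paper bounds $\E^P[Y]$ directly via the global density $\d P/\d \overline Q$ (producing the same weights $w_i=P(A_i)$), and that you carry out the spectral-to-ES-mixture conversion by hand where the paper simply cites Kusuoka's representation theorem.
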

\begin{proof}
Let $A_1,\dots,A_n\in \mathcal F$ be disjoint sets with $Q_i[A_i]=1$ for  $i=1,\dots,n$.
Write $Q^*=\frac1n\sum_{i=1}^n Q_i$ and $Z=\d P/\d {Q^*}.$
For each $i=1,\dots,n$, let $U_i$ be, under $Q_i$,  a uniform random variable on $[0,1]$
such that $Z=F^{-1}_{Z,Q_i}(U_i)$ $Q_i$-almost surely.
The existence of such a random variable $U_i$ can be guaranteed, for instance, by \citet[Lemma A.32]{FS16}.
By the Fr\'echet-Hoeffding inequality (\citet[Remark 3.25]{R13}),
for $Y\in \X$, we have  $\E^{Q_i}[ZY]\le \E^{Q_i}[ZF_{Y,Q_i}^{-1}(U_i)]$.
It follows that, for $Y\in L_X(\Q)$,
\begin{align*}
\E^P[Y] =\frac{1}{n}\sum_{i=1}^n\E^{Q_i}\left[\frac{\d P}{\d {Q^*}}Y \right]
 \le \frac{1}{n}\sum_{i=1}^n \E^{Q_i}\left[ZF_{X,Q_i}^{-1}(U_i) \right].
\end{align*}
 On the other hand, it is easy to verify that  $  \sum_{i=1}^n  F_{X,Q_i}^{-1}(U_i) \id_{A_i}\in L_X(\Q)$,
 and$$ \E^P\left[ \sum_{i=1}^n  F_{X,Q_i}^{-1}(U_i) \id_{A_i}\right]=\frac{1}{n}\sum_{i=1}^n \E^{Q_i}\left[ZF_{X,Q_i}^{-1}(U_i) \right].$$
Therefore,
$$\sup_{Y\in L_X(\Q)}\E^P[Y]= \frac{1}{n}\sum_{i=1}^n \E^{Q_i}\left[ZF_{X,Q_i}^{-1}(U_i) \right].$$
Note that
$$\E^{Q_i}\left[ZF_{X,Q_i}^{-1}(U_i) \right]=\int_0^1 F^{-1}_{Z,Q_i}(u)F^{-1}_{X,Q_i}(u)\d u,$$
and the function $\bar g:[0,1]\to [0,1]$, $t\mapsto \int_0^t F^{-1}_{Z,Q_i}(u)\d u$ is in $\mathcal G$ and is convex.
It follows that the mapping $X\mapsto \E^{Q_i}[ZF_{X,Q_i}^{-1}(U_i) ]$ is  a spectral risk measure in the form of \eqref{eq:int-2}.
Therefore, $\rho$ is a linear combination of  $Q$-spectral risk measures, $Q\in \Q$. Note that each $Q$-spectral risk measure is a mixture of $Q$-ES (\citet[Theorem 4]{K01}), and hence $\rho$ is a $\Q$-mixture of  ES.
\end{proof}

\begin{proof}[Proof of Theorem \ref{th:th1-prime}]
Concerning part (i), it suffices to observe that $\uQ$-spectral risk measures are coherent, and that a supremum of $\Q$-based coherent risk measures is also a $\Q$-based coherent risk measure.

For part (ii), since $\rho$ is coherent, by Lemma \ref{lem:44}, it has the $\Q$-Fatou property.
From the classic coherent   risk measure representation (\citet{D00}),
there exists a set $  \mathcal R\subset \mathcal P$ of probability measures which are absolutely continuous with respect to $Q^*$, such that
\begin{equation}\label{eq:th1}
\rho(X)=\sup_{P\in \mathcal R}\E^P[X],~~X\in \X. \end{equation}
Now fix $X\in \X$. As $\rho$ is $\Q$-based, $\rho(Y)=\rho(X)$ for all $Y\in L_X(\Q)$.
It follows that
\begin{align*}
\rho(X)&=\sup_{Y\in L_X(\Q)} \sup_{P\in \mathcal R}\E^P\left[Y\right]= \sup_{P\in \mathcal R}\sup_{Y\in L_X(\Q)}   \E^P [ Y].
\end{align*}
By Lemma \ref{lem:42}, for each $P\in \mathcal R$, the map $\X\to \R$, $X\mapsto \sup_{Y\in L_X(\Q)}   \E^P [ Y]$  is a mixture of $Q$-ES for $Q\in \Q$.
Therefore, $\rho$ is the supremum of $\Q$-mixtures of ES.
 \end{proof}
 
The representation in \eqref{eq:maxes-rep} resembles the risk measure  in the Basel FRTB formula; see Section \ref{sec:1}. Indeed, it is remarkable that only using maximums and linear combinations  of $Q$-ES, as done in \cite{BASEL16}, one arrives at all possible $\Q$-based coherent risk measures, if $\Q$ is mutually singular.
Certainly, the set of measures $\Q$ chosen in \cite{BASEL16} is not necessarily mutually singular, allowing for more possible forms of coherent risk measures.  Nevertheless,     the maximum of linear combinations  of $Q$-ES is the only form that is coherent for all choices of scenarios, among the general class of scenario-based risk measures.

The characterization   in Theorem \ref{th:th1-prime} on the set of bounded random variables can be
extended to general $L^q$-spaces.   
In the following, let $Q_0\in \mathcal P$ be a probability measure dominating $\Q$,
and $L^q(Q_0)$, $q\ge 1$ be the set of random variables with a finite $q$-th moment under $Q_0$.
\begin{proposition}\label{th:5}
Suppose that $\uQ$ is mutually singular and \trd{atomless}, and $q\ge 1$. The mapping $\rho:L^q(Q_0)\to \R $ is a  $\Q$-based coherent risk measure if and only if it is a supremum of $\Q$-mixtures of ES as  in \eqref{eq:maxes-rep}.  
\end{proposition}
\begin{proof} 
Clearly, the supremum of $\Q$-mixtures of ES is a $\Q$-based coherent risk measure.   
It suffices to show the ``only-if" statement. 
First, is it easy to verify, with the same arguments as in Theorem \ref{th:th1-prime}, that  Lemma  \ref{lem:42}  holds true with ``$\X\to \R$" replaced by ``$L^q(Q_0)\to \R\cup\{\infty\}$".
Note that $\rho$ in Proposition \ref{th:5} is finite-valued, and all finite-valued convex risk measures on $L^q$ are continuous (\citet[Theorem 7.24]{R13}). 
Therefore, $\rho$ admits a representation in \eqref{eq:th1} with $\X$ replaced by $L^q(Q_0)$. 
Using the same argument in the proof of Theorem \ref{th:th1-prime} (ii), we conclude that 
$\rho$ is the supremum of $\Q$-mixtures of ES.
\end{proof}
\begin{remark}
In Proposition \ref{th:5}, we have formulated $\rho$ such that it is real-valued on $L^q(Q_0)$. 
This assumption is useful, as not all $\Q$-mixtures of ES are finite on $L^q(Q_0)$, and this is because integrability is not preserved among equivalent measures. \end{remark}

 Theorem \ref{th:th1-prime} is  the most technical result of this paper. The mutual singularity of $\uQ$ is used repetitively in the proof  and it  is not dispensable for part (ii).   
 Without the mutual singularity of $\uQ$, the converse of Theorem \ref{th:th1-prime} (ii) is not true. 
 Note that the representation \eqref{eq:maxes-rep} is $\Q$-monotone, and we have seen in Example \ref{ex:mono} that there are $\Q$-based coherent risk measures which are not $\Q$-monotone, see also Proposition \ref{prop:qmono}. 
 
We expect that an interesting characterization result for $\Q$-based coherent risk measures without the assumption of mutual singularity of $\uQ$ will at least require the additional assumption of $\rho$ being $\Q$-monotone, possibly even for a weaker stochastic order than the usual stochastic ordering. 
However, even with such an additional assumption, the characterization problem without assuming mutual singularity is wide open, and it seems beyond the reach of current technical tools. The current line of proof requires the understanding of the functional defined in Lemma \ref{lem:42}. The functional is easy to understand via the Hardy-Littlewood inequality in the case of mutual singularity. However, without this assumption, the functional is hard to understand and may not be the right tool for a characterization result.


\begin{remark}\label{rem:distinguish}
  \citet{KP16} considered  $\Q$-based risk measures   $\rho$ with the representation 
$    \rho(X)=f(\rho_1(X),\dots,\rho_n(X))  $
for some aggregation function $f:\R^n\to \R$ and risk measures  $\rho_1,\dots,\rho_n$ each  based on one scenario; see Eq.~(22) of that paper.  They imposed some   axioms on   $f$ and $\rho_1,\dots,\rho_n$.
Our axioms are directly imposed on the risk measure $\rho$ and we do not assume a particular functional form.
\end{remark}

\section{Scenario-based VaR and ES and other examples}\label{sec:messec}

Because of the prominent importance of VaR and ES in external regulatory capital calculation and internal risk management, we investigate several examples of scenario-based risk measures which can be seen as natural generalizations VaR and ES in a multi-scenario framework. In this section, $\Y$ is any convex cone of random variables containing $\X$.
 

 

   \begin{example}[Max-ES and Max-VaR] Let $p \in (0,1)$. For a collection of measures $\Q$,  the \emph{Max-ES (MES)} is defined as
\begin{equation}\label{eq:MESdef} \MES_p^{\Q}(X)=\sup_{Q\in \Q}\ES_p^{Q}(X),~~X\in \Y,\end{equation}
and the \emph{Max-VaR (MVaR)}  is defined as
\begin{equation*} \MVaR_p^{\Q}(X)=\sup_{Q\in \Q}\VaR_p^{Q}(X),~~X\in \Y.\end{equation*}
   \end{example}
Max-ES and Max-VaR incorporate information evaluated under each scenario, and make a \emph{conservative} capital calculation by taking the maximum. We call them \emph{max-type} risk measures. They appear in the literature of robust optimization; see e.g.~\citet{ZF09,ZKR12}. 
Similar to the single-scenario-based ES and VaR in \eqref{eq:vardef} and \eqref{eq:esdef}, Max-ES and Max-VaR have different mathematical properties. 

Max-ES is known to be coherent (\citet{ZF09}), but in general, and in contrast the single-scenario-based ES, it fails to be comonotonic-additive; see Example \ref{ex:mes-nca} in Appendix \ref{sec:appendix}. Quite surprisingly, the risk measure $\MVaR_p^\Q$ satisfies comonotonic-additivity. If $\Q$ is finite with $n$ elements, this follows by choosing $\bar{\psi}$ as the distribution function of the Dirac measure at $(p,\ldots,p) \in [0,1]^n$ in Proposition \ref{th:minvargen}. For general sets $\Q$ of scenarios, see Appendix \ref{app:pr5}.

\begin{remark} As a classic result (\citet{D00}), a coherent risk measure $\rho$ on $\X$ with the Fatou property has a dual representation
\begin{equation*}
 \rho(X)=\sup_{Q\in \Q}\E^Q[X],~~ X\in \X,\end{equation*}
 for some set of probability measures $\Q$.
Clearly, $\rho$ is  a  max-type $\Q$-based risk measure with a different building block (the expectation) than Max-VaR and Max-ES.
This includes, in particular, the methodology for margin requirement calculation developed by Chicago Mercantile Exchange \cite[p.63]{MFE15}. In addition, \citet{DSV15}  feature the maximum-type of risk measures in the context of calculation of initial margins for bilateral portfolios.
\end{remark}


 Other than the max-type,  using a finite  or  continuous mixture  is also a convenient and simple way to construct $\Q$-based risk measures  {(however, using a infinite mixture requires a measure specified on $\mathcal Q$, which is not always easy)}. 
Below we present a few other ways to formulate ES in the framework of $\Q$-based risk measures. One may define the corresponding versions of VaR or any other law-based risk measure, but we take $\ES$ as a main example in this section due to its relevance in Basel III \& IV.

 \begin{example}
Take $p\in(0,1)$ and a  finite $\Q=\{Q_1,\dots,Q_n\}$, and
define the  \emph{Average-ES} (AES) as  the average of ES across different scenarios, that is,
\begin{equation}\label{eq:mix-es-a}
\AES_{p}^{\Q}(X)= \frac{1}n \sum_{i=1}^n  \ES_p^{Q_i}(X),~~X\in \Y.
\end{equation}
  It is obvious that $\AES_{p}^{\Q}$ is a coherent and comonotonic-additive risk measure. Its $\uQ$-distortion function $\psi(\mathbf{u}) =  (n(1-p))^{-1}\sum_{i=1}^n \min\{u_i,1-p\}$ is increasing, component\-wise concave and submodular. For $n=2$, the associated function $\bar{\psi}(\mathbf u) = 1 - \psi (1-\mathbf u)$, $\mathbf u \in [0,1]^n$ is a componentwise convex distribution function on $[0,1]^2$, so in this case, $\AES_{p}^{\Q}$ has an integral representation by Proposition \ref{th:minvargen}. For $n \ge 3$, $\bar \psi$ fails to be a distribution function.
  \end{example}

\begin{example}
Recall that  the single-scenario-based ES in \eqref{eq:esdef} is an average of VaR of probability level beyond $p\in (0,1)$.
Utilizing this connection, we define
 the \emph{integral  Max-ES} (iMES) as the integral of MVaR, that is,
\begin{equation}
\label{eq:mes-var-def}
{\GES}_p^\Q(X)=\frac{1}{1-p}\int_p^1 \MVaR_q^\Q(X)\d q,~~X\in \Y.
\end{equation}
For finite $\Q$ with $n$ elements, we can choose $\bar \psi$ as the distribution function of a uniform distribution over the diagonal line segment $$\{(u_1,\dots,u_n)\in [p,1]^n: u_1=u_2=\dots=u_n\}$$ to obtain the integral Max-ES as a special case of Proposition \ref{th:minvargen}. Its $\uQ$-distortion function $\psi$ is given by $\psi(\mathbf u)=\min\{\max\{u_1,\dots,u_n\},1-p\}/(1-p)$, $\mathbf u \in [0,1]^n$. This verifies that ${\GES}_p^\Q$ is comonotonic-additive.
However, $\psi$ is not componentwise concave, which  implies that ${\GES}_p^\Q$ is not a coherent risk measure for   mutually singular $\uQ$ by Proposition \ref{coro:31}. For random variables $X$ such that $F_{X,Q_i}$, $i=1,\dots,n$, are stochastically ordered, it holds that $\GES_p^\Q(X) = \MES_p^\Q(X)$.

Noting that $ \MVaR_q^\Q(X)$ is increasing in $q$, 
one may  equivalently write
\begin{equation}
\label{eq:eqMES2} {\GES}^\Q_p(X)=\ES_p^{\p}\left(\MVaR_U^\Q(X)\right) = \ES_p^{\p}\left(\sup_{Q\in\Q} F^{-1}_{X,Q}(U)\right),~~X\in \Y,\end{equation}
where $U\sim_\p \mathrm{U}[0,1]$. 
\end{example}

\begin{example}
Another way to utilize the ES and a maximum operator is via independent replications of $X$ under different scenarios.
 Let $p\in (0,1)$, $Q_1,\dots,Q_n$ be distinct scenarios and $\Q=\{Q_1,\dots,Q_n\} $.
Define  a \emph{replicated  Max-ES} (rMES) as the ES of  a maximum of independent copies, that is,
\begin{equation}
\label{eq:eqMES3}
{\VES}^\Q_p(X)=\ES_p^{\p}\left(\max_{i=1,\dots,n} X_i\right),~~X\in \Y,\end{equation}
where $X_i\sim_\p F_{X,Q_i}$, $i=1,\dots,n$, and $X_1,\dots,X_n$ are independent under~$\p$.
The risk measure  ${\VES}^\Q_p$ is defined for  a finite collection $\Q$ so that the maximum in \eqref{eq:eqMES3} is well-posed. The risk measure  ${\VES}^\Q_p$ finds some similarity to MINVAR in \citet{CM09}; see Example \ref{ex:ex31} for more details. The replicated Max-ES grows to the maximum of the essential supremum of the distributions $F_{X,Q_i}$ when the number $n$ of scenarios goes to infinity. Therefore, it is likely to be too conservative or even plainly uninformative if $n$ is too large; see also Section \ref{sec:6}.

The replicated Max-ES is comonotonic-additive and coherent. We can determine its $\uQ$-distortion function as follows. Suppose that $X_1,\dots,X_n$ are independent under $\p$, $X_i\sim_\p F_{X,Q_i}$, $i=1,\dots,n$, and $U\sim_\p \mathrm{U}[p,1]$.
The distribution function $F_{\max}$ of $\max \{X_1,\dots,X_n\}$ under $\p$ is 
 $x\mapsto \prod_{i=1}^n Q_i[X\le x]$.
Moreover, the survival function of $F_{\max}^{-1}(U)$ under $\p$ is given by
\begin{align*}
 x\mapsto 1-\frac{(F_{\max}(x)-p)_+}{1-p} &=1- \frac{( \prod_{i=1}^n Q_i[X\le x]-p)_+}{1-p}\\&= \frac{\min\{1- \prod_{i=1}^n Q_i[X\le x],1\} }{1-p}.
 \end{align*}  
 Hence, by letting $\psi: \mathbf u \mapsto \min\{1-\prod_{i=1}^n (1-u_i),1-p\}/(1-p)$,
 we have $ \psi \circ \underline Q[X>x]= \p(F^{-1}_{\max} (U)>x)$, $x\in \R$ and hence
 $$
 \int X \d \psi \circ \underline Q = \E^\p\left(F^{-1}_{\max} (U)\right)  = \ES^\p_p\left(\max_{i=1,\dots,n} X_i\right) =\VES^{\mathcal Q}_p(X).
 $$
 Thus, $\psi$ is the $\underline Q$-distortion function of $\mathrm{iMES}^{\mathcal Q}_p$ which is componentwise concave and submodular. 
 
We find that $\bar\psi(\mathbf u) = (\prod_{i=1}^n u_i - p)_+/(1-p)$, $\mathbf u \in [0,1]^n$ is a distribution function, so the replicated Max-ES an integral representation as in Proposition \ref{th:minvargen}. Indeed, $\bar\psi$ arises when combining uniformly distributed marginals on $[p,1]$ with an Archimedean copula with generator $$[0,1] \to [0,\infty),\quad t \mapsto -\log((1-p)t + p).$$ This generator is completely monotone, so we obtain a valid copula for any dimension $n$ (\citet{McNeilNeslehova2009}).
\end{example}

 {The risk measures ${\GES}^\Q_p,$ and ${\VES}^\Q_p$  are connected through the fact that if $X_1,\dots,X_n$ in \eqref{eq:eqMES3} are comonotonic instead of independent under $\p$, then \eqref{eq:eqMES3}  gives rise to \eqref{eq:eqMES2}.}
 Each of the $\Q$-based risk measures
$\MES^\Q_p,$ $\AES_p^\Q$, $ {\GES}^\Q_p$ and ${\VES}^\Q_p$  may be seen as a natural generalization of the single-scenario-based risk measure $\ES_p^Q$.
Although bearing similar ideas,  these risk measures have different properties and values.
If $\Q=\{Q\}$, then the above five risk measures are all equal. They are generally non-equivalent and satisfy an order summarized below. Proposition \ref{th:th32} summarizes the results of this section; some are already explained above. 

 \begin{proposition}\label{th:th32}
Let  $\Q$ be a finite collection of scenarios and $p\in (0,1)$.
\begin{enumerate}[(i)]
\item  $\MES_p^\Q$ is coherent, but generally  not  comonotonic-additive.
  \item $\MVaR_p^\Q$ is  comonotonic-additive, positively homogeneous and monetary, but generally  not   coherent.
\item $\AES^\Q_p$ is comonotonic-additive and coherent.
 \item ${\GES}^\Q_p$ is comonotonic-additive, but generally not coherent.
  \item ${\VES}^\Q_p$ is comonotonic-additive and coherent.
\item  $\AES^\Q_p(X)\le \MES^\Q_p(X)\le{\GES}^\Q_p (X)\le {\VES}^\Q_p(X)$ for all $X\in \mathcal Y$.
\item If $\Q$ is a singleton,  then all inequalities in (vi) are equalities. 
  \end{enumerate}
\end{proposition} 

\begin{proof}
It only remains to show (vi). The first inequality is trivial. For the second inequality, observe that for each $Q \in \Q$ and $U\sim_\p \mathrm{U}[0,1]$, we have $\ES^\Q_p(X) = \ES^\p_p(F_{X,Q}^{-1}(U))$, hence the claim follows by using \eqref{eq:eqMES2}. The third inequality follows by observing that the $\uQ$-distortion function of $\GES$ is pointwise smaller than the $\uQ$-distortion function of $\VES$. This holds since $\max\{u_1,\dots,u_n\} \le 1 - \prod_{i=1}^n(1-u_i)$ is equivalent to $\prod_{i=1}^n(1-u_i) \le (1-u_j)$ for all $j=1,\dots,n$ where the latter clearly holds for any $\mathbf u \in [0,1]^n$.
\end{proof}

 The above examples illustrate that the framework of $\Q$-based risk measures is flexible, and it allows for a great variety of risk measures to be formulated, even simply from the ES and a fixed $p$. Parts (i),(ii) and (iv) of Proposition \ref{th:th32} hold true if $\Q$ is infinite; see Appendix \ref{app:pr5}. 

There is a simple relationship between iMES (MVaR) and ES (VaR) when the collection of scenarios $\Q$ is the economic scenarios in Example \ref{ex:ex21}.
\begin{proposition}\label{prop:mes-es}
Let $\Q=\{Q_\theta:\theta\in K\}$ be as in Example \ref{ex:ex21}. For $p\in (0,1)$,
 $\MVaR_p^\Q(X) \ge \VaR_p^\p (X)$ and $\GES_p^\Q(X) \ge \ES_p^\p (X)$ for all $X\in \mathcal Y$.
\end{proposition}
\begin{proof}
We only need to show that $\MVaR_p^\Q(X) \ge \VaR_p^\p (X)$, which implies $\GES_p^\Q(X) \ge \ES_p^\p (X)$.
Take $x<  \VaR_p^\p (X)$, which implies $\p[X\le x]<p$.
As $\p$ is a convex combination of $Q_\theta$, $\theta\in K$, for some $\theta\in K$, we have $Q_\theta [X\le x]<p$,
implying $x \le \VaR^{Q_\theta}_p(X)\le  \MVaR^\Q_p(X)$. Therefore,
$$\MVaR^\Q_p(X)\ge \sup\{x\in \R: x<\VaR_p^\p (X)\}=\VaR_p^\p(X).$$
\end{proof}

Proposition \ref{prop:mes-es} implies that when using the economic scenarios in Example \ref{ex:ex21}, iMES (MVaR) is more conservative than ES (VaR) over the unconditional real world probability measure $\p$.
Note that $\MES^\Q_p(X)\ge \ES^\p_p(X)$ does not hold in general (see Example \ref{ex:mes-es-ct} in Appendix \ref{sec:appendix} for a counter-example), although this inequality almost always holds empirically, as we shall see in the data analysis in Section \ref{sec:6}. 

Below, we discuss two more connections of our risk measures to the ones in the literature. 
\begin{example}
Recently, \citet{R18} studied 
combinations of risk measures.
For a finite collection of scenarios $\{Q_1,\dots,Q_n\}$, these risk measures take the form $f(\rho^{Q_1},\dots,\rho^{Q_n})$ for some function $f$, where $\rho^{Q_i}$ is $\{Q_i\}$-based, $i=1,\dots,n$, and this includes the max-type risk measures.
Another  scenario-based risk measure of the type  $f(\rho^{Q_1},\dots,\rho^{Q_n})$ is given by \citet{KP16}, defined as
$$\rho(X)= \sup_{(w_1,\dots,w_n)\in \mathcal W}\left\{\sum_{i=1}^n w_i \rho^{Q_i}_{h_i}(X)\right\},~~X\in \X,$$
where $\rho^{Q_i}_{h_i}$, $i=1,\dots,n$ are $Q_i$-distortion risk measures  in \eqref{eq:eqdistdef} and $\mathcal W$ is a subset of the standard simplex $\{(w_1,\dots,w_n)\in [0,1]^n: \sum_{i=1}^n w_i=1\}$.
\end{example}

\begin{example}[Scenario-based MINVAR]\label{ex:ex31} In \eqref{eq:int-rep}, by choosing $\bar \psi(\mathbf u)=\prod_{i=1}^n u_i$ for $\mathbf u\in [0,1]^n$, we obtain
\begin{equation*} \rho(X)=\E^\p[\max\{X_1,\dots,X_n\}],~~X\in\X,\end{equation*}
where  $F_{X_i,\p}= F_{X,Q_i}$ for $i=1,\dots,n$, and $X_1,\dots,X_n$ are independent under $\p$.
Then, $\rho$ is a  $\uQ$-spectral  risk measure  with corresponding $\uQ$-distortion function $\psi(\mathbf u)=1-\prod_{i=1}^n (1-u_i)$, $\mathbf u\in [0,1]^n.$
 The  risk measure $\rho$ is coherent.  The single-scenario-based risk measure MINVAR (\citet{CM09}), defined as
 $$\mathrm{MINVAR}(X)=\E^\p[\max\{X_1,\dots,X_n\}],~~X\in\X,$$
 where $X_1,\dots,X_n$ are iid copies of $X$ under $\p$, is a special case of $\rho$ by choosing $Q_1=\dots=Q_n=\p$.
\end{example}

\section{Data analysis for scenario-based risk measures}\label{sec:6} 
{In this section, we discuss two examples of data analysis for $\Q$-based risk measures. The two examples are conceptually different with the aim to illustrate different possible interpretations for the collection $\Q$ of scenarios; cf.~Remark \ref{rem:1}. Various versions of the $\Q$-based Expected Shortfalls as in Section \ref{sec:messec} are chosen to illustrate the main ideas; clearly the analysis may be applied to other scenario-based risk measures.}

\subsection{Max-Expected Shortfalls for economic scenarios}\label{sec:51}

Taking up Example \ref{ex:ex21}, we consider $Q_i=\p[\cdot|\Theta=\theta_i]$, $i=1,\dots,4$,  where $\Theta$ is an economic factor taking values in $\{\theta_1,\dots,\theta_4\}$,  where $\p$ can be interpreted as the real-world probability measure. While the $\Q$-based Expected Shortfalls of $X$ are clearly defined mathematical quantities, it is not completely obvious how to estimate them. The approach we describe can be justified under suitable assumptions on the data generating processes. However, we leave a detailed study of the proposed estimator for future work.

Here, we assume that we have four data sets $D_1 = \{X_{1}^{Q_1},\dots,X_{N_1}^{Q_1}\}$, \dots, $D_4 = \{X_{1}^{Q_4},\dots,X_{N_4}^{Q_4}\}$ such that the empirical distribution of $D_i$ is a reasonable estimate of $F_{X,Q_i}$, $i=1,\dots,4$. Then, we estimate the risk measures $\ES_p$, $\MES_p$, $\GES_p$, and $\VES_p$ given at \eqref{eq:MESdef}, \eqref{eq:mix-es-a}, \eqref{eq:mes-var-def}, and \eqref{eq:eqMES3}, respectively, by their empirical counterparts.

Given a series of returns $(X_t)_{t \in \mathbb{N}}$, for each trading day, we compute $\Q$-based expected shortfalls of $X_t$ estimated on a rolling window of length $250$. The four scenarios can be interpreted as 
\begin{multline*}\{\theta_1,\dots,\theta_4\}\\=\{\text{high volatility},\text{low volatility}\}\times \{\text{good economy},\text{bad economy}\}.\end{multline*} The value of $\Theta$ is based on the values of VIX (high volatility/low volatility) and S\&P 500 (good economy/bad economy). To be precise, for day $t_0$ we use the time window $t_0-250,\dots,t_0-1$. Then, we use the VIX to split the time period into two categories depending on whether the VIX is higher or lower than its empirical median in the time window. We removed a log-linear trend from the S\&P 500 since 1950, and then we subdivide the $125$ days with high volatility in the current time window into two categories of (almost) equal size according to whether the S\&P 500 residuals are above or below their median during those $125$ days. The same is then done for the $125$ days with low volatility. This results in a split of the time window into four scenarios of (almost) equal size.

The sets $D_1$, \dots, $D_4$ consist of the values of $X_t$ for $t = t_0-250,\dots,t_0-1$ depending on which scenario the respective day has been assigned.
We considered return data from the NASDAQ Composite Index, the DAX Index, Apple Inc.~stock, and BMW stock. The data are freely available and were obtained from \url{https://finance.yahoo.com}. The considered time periods are 1991--2018. We do not consider data from before 1990 because there is no VIX data available. We chose the confidence level $p = 0.9$ for simplicity. For each series of return data, we also computed the empirical $\ES_p$ using a rolling windows the same size. The results of the analysis are summarized in Figure \ref{fig:2}.

\begin{figure}[h]
\begin{center}
\includegraphics[width=0.48\textwidth]{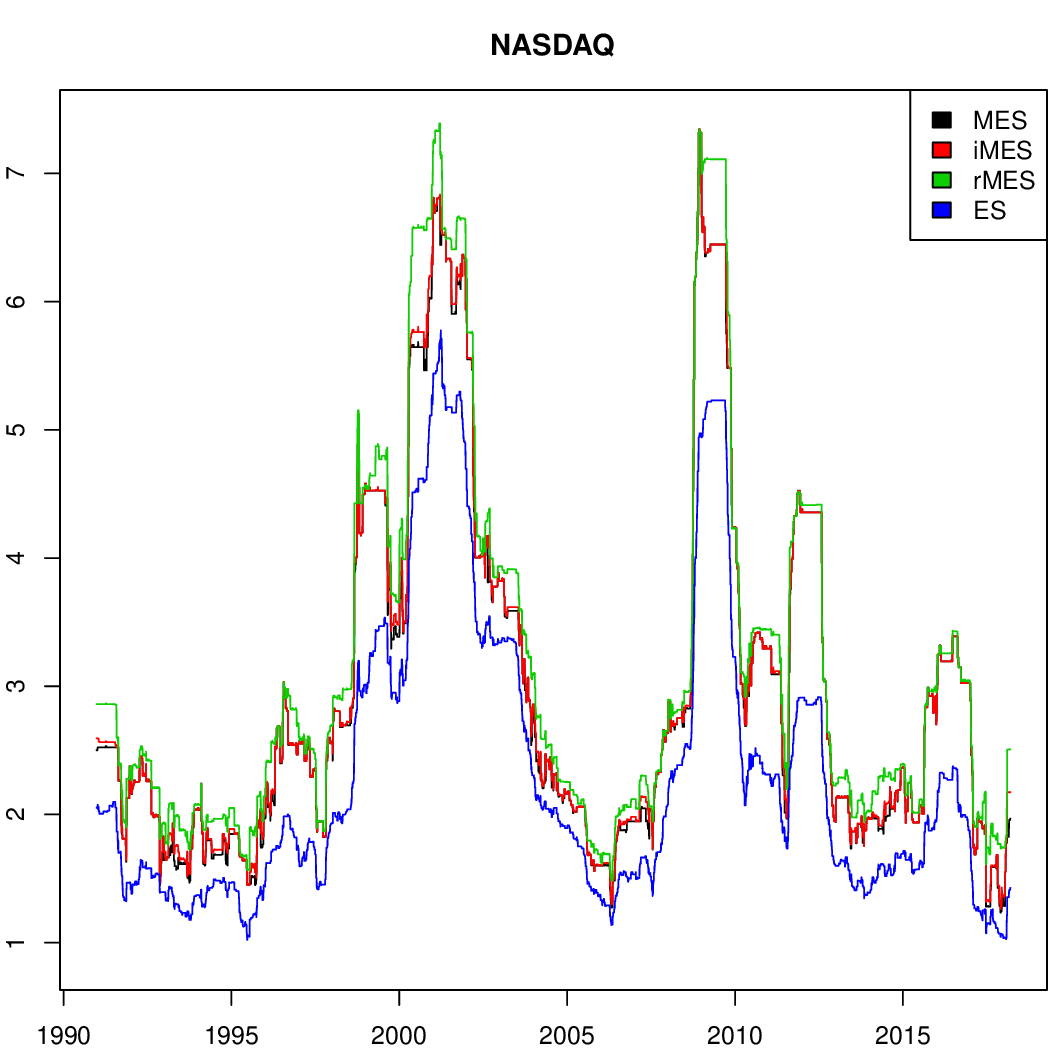}
\includegraphics[width=0.48\textwidth]{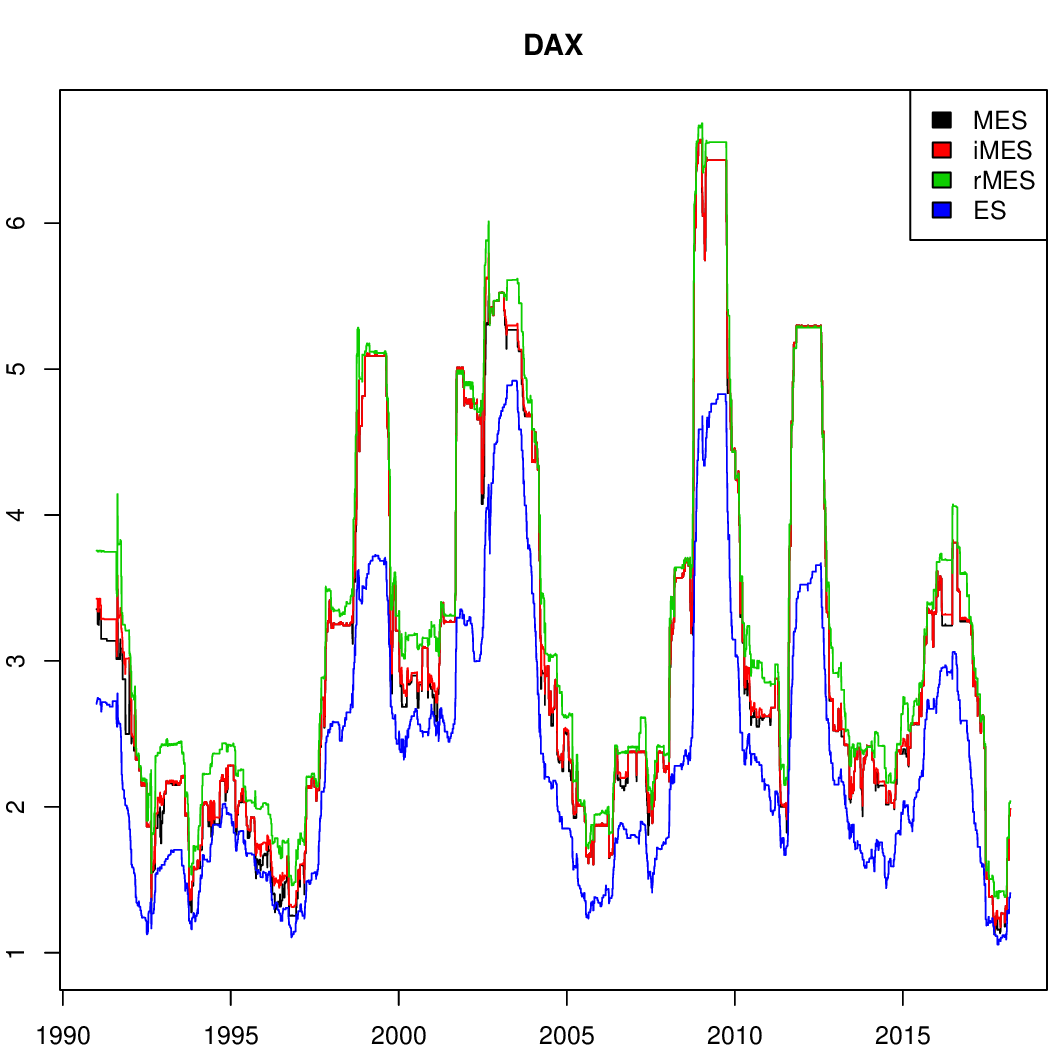}
\includegraphics[width=0.48\textwidth]{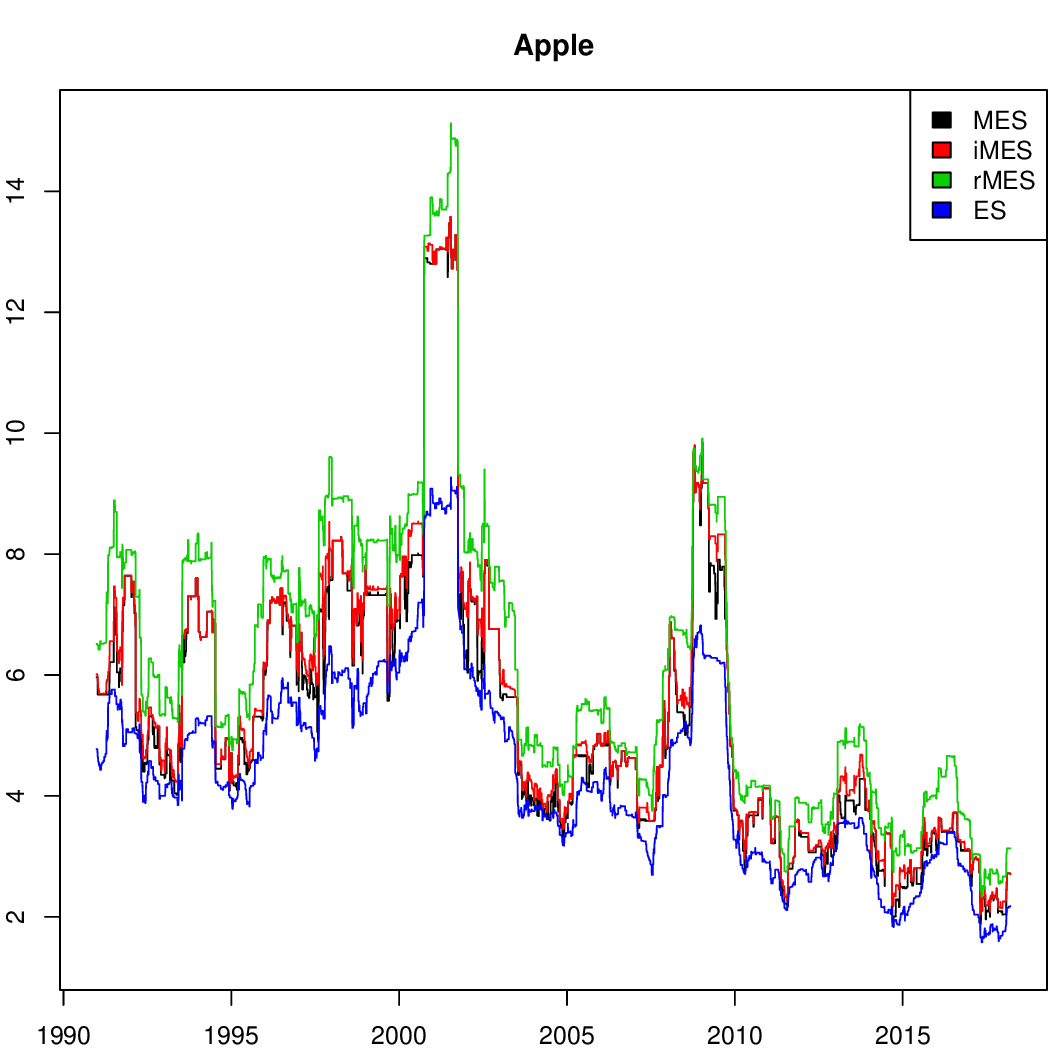}
\includegraphics[width=0.48\textwidth]{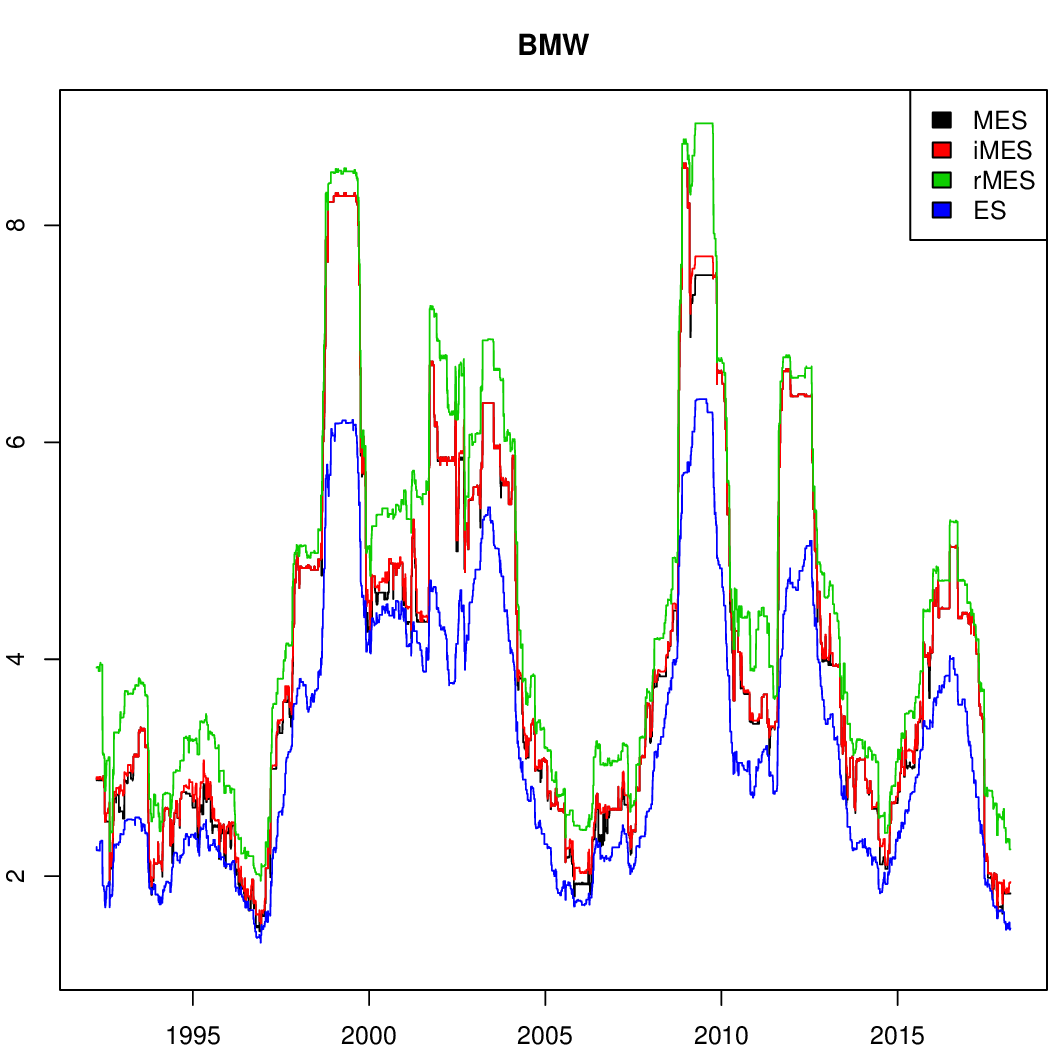}
\end{center}
\caption{$\Q$-based risk measures estimated for data based on economic scenarios with $w = 250$. \label{fig:2}}
\end{figure}

The risk measures $\MES_p$ and $\GES_p$ generally yield similar values. This may be a hint that, often, the empirical distributions under the different scenarios are stochastically ordered. One can observe that during times of financial stress, the $\Q$-based Expected Shortfalls and $\ES_p$ deviate substantially, whereas they are closer during an economically stable period. For the indices (NASDAQ and DAX) $\GES_p$ and $\VES_p$ are closer than for the stock returns (Apple and BMW). The risk measures $\GES_p$ and $\VES_p$ are close if the empirical distribution under one scenario strongly dominates the others in the sense that the quantile function under one scenario is much larger than under the others. 
Therefore, this phenomenon may be explained by the fact that the indices are more closely related to the quantities defining the economic scenarios (VIX and S\&P 500). It also explains why the ratio between $\VES_p$ and $\GES_p$ is generally larger than during financial stress than in economically stable periods. The ratio between $\MES_p$ and $\ES_p$ qualitatively distinguishes the early 2000s recession from the 2008 financial crisis being larger during the latter event, except for the Apple stock. Apple seems to have been more influenced by the dot-com crash in 2000 than the other stocks and indices.

\subsection{The Basel stress-adjustment for Expected Shortfall}\label{sec:52}

In this section, we calculate the stress-adjustment for Expected Shortfall in the Basel market risk evaluation as outlined in Section \ref{sec:11}.
Suppose that there are $n$ securities in a portfolio, and let $P_t^i$, $i=1,\dots,n$, $t\in \N$ denote the time-$t$ price of security $i$.
Let $X_t^i= -(P_t^i/P_{t-1}^i-1) $ be its daily negative return.
Construct a portfolio with price process
$V_t = \sum_{i=1}^n \alpha_i P_t^i  $ where for $i=1,\dots,n$, $\alpha_i$ is the unit of shares invested in security $i$, which is assumed fixed throughout the investment period.
At time $t-1$, we need to to calculate the empirical ES of the next day loss of this portfolio.
Note that  the daily loss is $$V_{t-1} -V_{t} =\sum_{i=1}^n \alpha_i (P^i_{t-1}-P^i_{t} ) =\sum_{i=1}^n X_t^i \alpha_i P_{t-1}^i .$$
At time $t-1$, the values $\alpha_i$ and  $P_{t-1}^i$ are known,  and the random risk factors are $(X_t^1,\dots,X_t^n)$.
To calculate the ES over the data of the past 12-month of data, we need to evaluate the quantity, given the number $\alpha_i P_{t-1}^i$,
$$
\ES^P_p(V_{t-1} - V_{t})=\ES^P_p\Big(\sum_{i=1}^n X_t^i \alpha_i P_{t-1}^i\Big),
$$
where $p=0.975$ as specified in \cite{BASEL16}.
For this purpose,  the scenario $P$ is modelled such that  the distribution of  $(X_t^1,\dots,X_t^{n})$ is according to its empirical version over the past 250 observations, i.e.~over the period $[t-250,t-1]$.   

ES should be calibrated to the most severe 12-month period of stress over a long observation horizon, which has to span back to 2007, as  specified in \cite{BASEL16}. To mimic this adjustment for the period before the introduction of Basel III, it seems fair for everyday evaluation to look back 10 years, and find the maximum ES over a 12-month period. For this purpose, we evaluate, while treating $\alpha_i P_{t-1}^i$ as a constant, the Max-ES given by
$$
\MES^\Q_p(V_{t-1}-V_t) = \MES^\Q_p\Big(\sum_{i=1}^n X_t^i \alpha_i P_{t-1}^i\Big)  = \max_{j=1,\dots,N} \ES^{Q_j}_p\Big(\sum_{i=1}^n X_t^i \alpha_i P_{t-1}^i\Big),
$$
where  $N=2251$, $\Q=\{Q_j\}_{j=1,\dots,N}$, and under $Q_j$, $(X_t^1,\dots,X_t^n)$ is distributed according to its empirical distribution over the time period $[t-j-249, t-j]$. Using the same scenarios, we also compute the integral Max-ES, that is $\GES^\Q_p(V_{t-1}-V_t)$. It is not sensible to compute the replicated Max-ES, $\VES^\Q_p$ in this case, since the number of scenarios is too large. We choose $\alpha_1,\dots,\alpha_n$ such that each $\alpha_i P_i^t $ starts from \$1.
We construct a US stocks portfolio (Apple and Walmart) and a German stocks portfolio (BMW and Siemens).

In Figure \ref{fig:4}, we report $\ES^P_p$, $\MES^\Q_p$ and $\GES^\Q_p$ in the left panel, and 
the percentage of $\ES^P_p$, $\MES^\Q_p$ and $\GES^\Q_p$, that is, $\ES^P_p\left(V_{t-1}-V_t\right)/V_{t-1}$ and so forth, in the right panel. The first row concerns the US stocks portfolio and the second row the German stocks portfolio. 

The percentage $\MES^\Q_p$ and $\GES^\Q_p$ are relatively stable (between 6\% and 9\%, and 7\% and 10\%, respectively), and the percentage $\ES^P_p$ is changing drastically (between 2\% and 9\%), very much depending on the performance of the individual stocks over the past year. This suggests that $\MES^\Q_p$ and $\GES^\Q_p$ have the advantage of being more robust, since they are computed as worst cases over many past scenarios.
The US portfolio has a quite high percentage $\MES^Q_p$ and $\GES^\Q_p$ until 1998 and this is due to the effect of the Black Monday (Oct 19, 1987) that wears out after 10 years. The quotient of $\MES^\Q_p$ and $\GES^\Q_p$ is varies little over time, which is in contrast to the analysis in Section \ref{sec:51}. Currently, we are lacking a clear theoretical explanation of this phenomenon.

If the regulatory capital for the market risk is calculated via $\ES^P_p$, then both portfolios exhibit serious under-capitalization right before the 2007 financial crisis, and their $\ES^P_p$ values increased drastically when the financial crisis took place. On the other hand, if $\MES^\Q_p$ or $\GES^\Q_p$ are used for regulatory capital calculation, then the requirement of capital for both portfolios only increased moderately during the financial crisis. From the data analysis, do not see a clear advantage of $\MES^\Q_p$ over $\GES^\Q_p$, or vice versa. However, $\GES^\Q_p$ seems preferable from a theoretical perspective since it is comonotonic-additive. 
\begin{figure}
\includegraphics[width=0.48\textwidth]{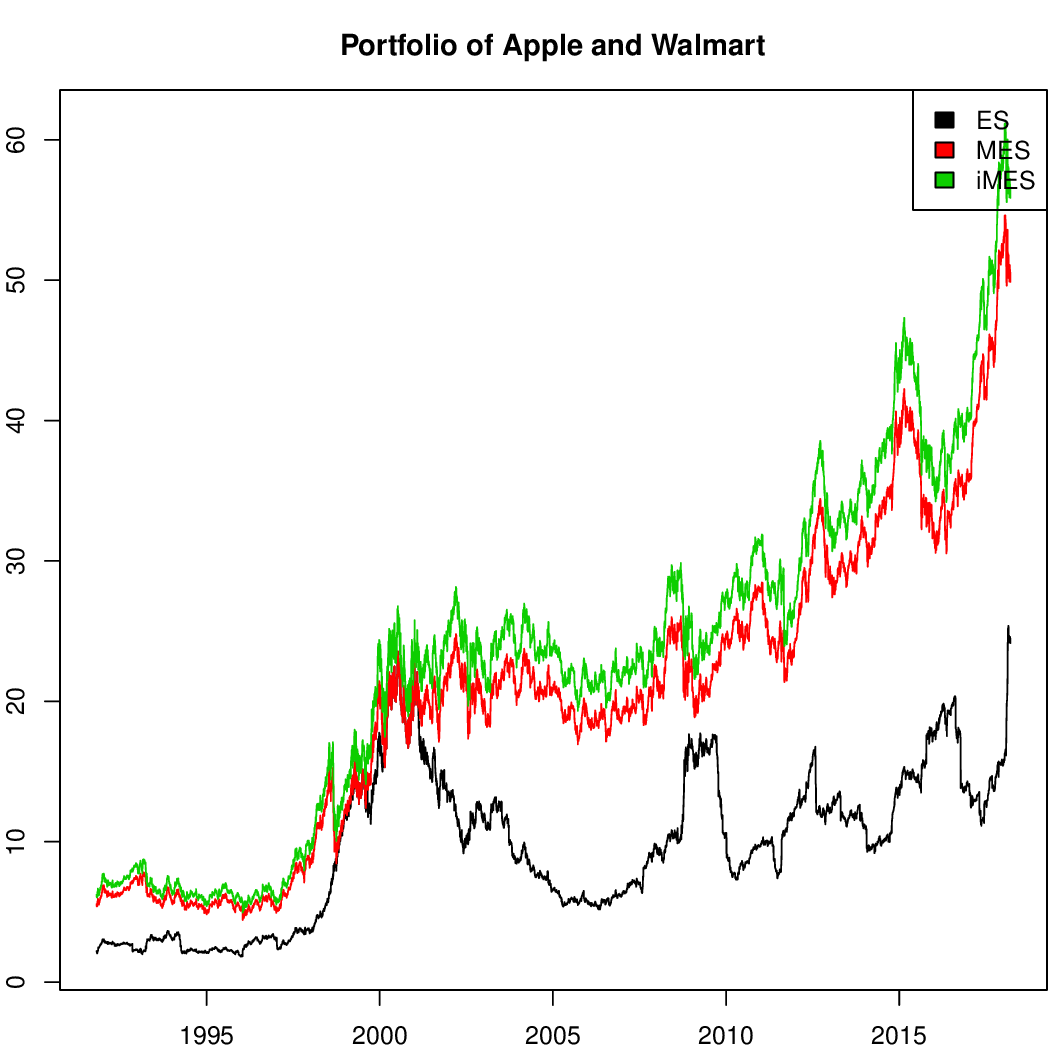}
\includegraphics[width=0.48\textwidth]{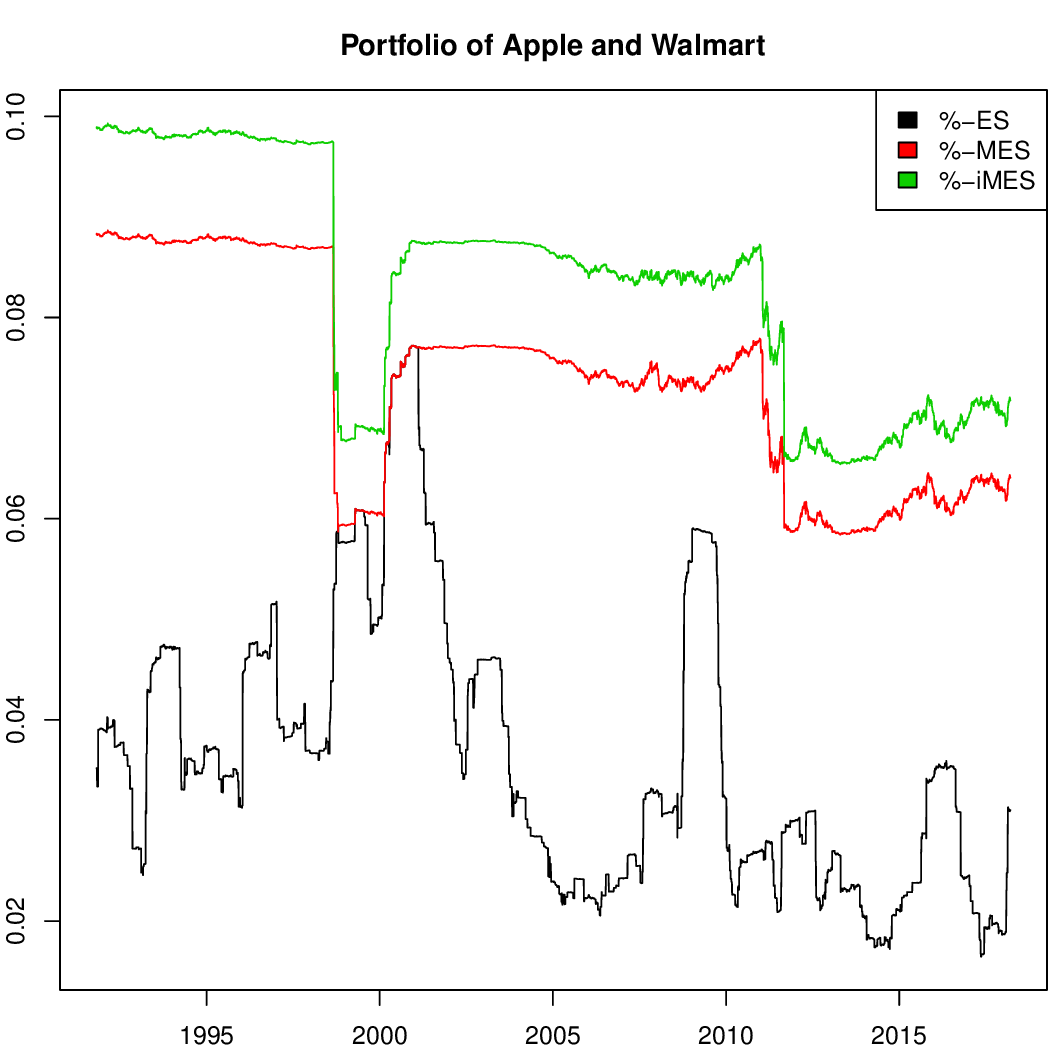}\\
\includegraphics[width=0.48\textwidth]{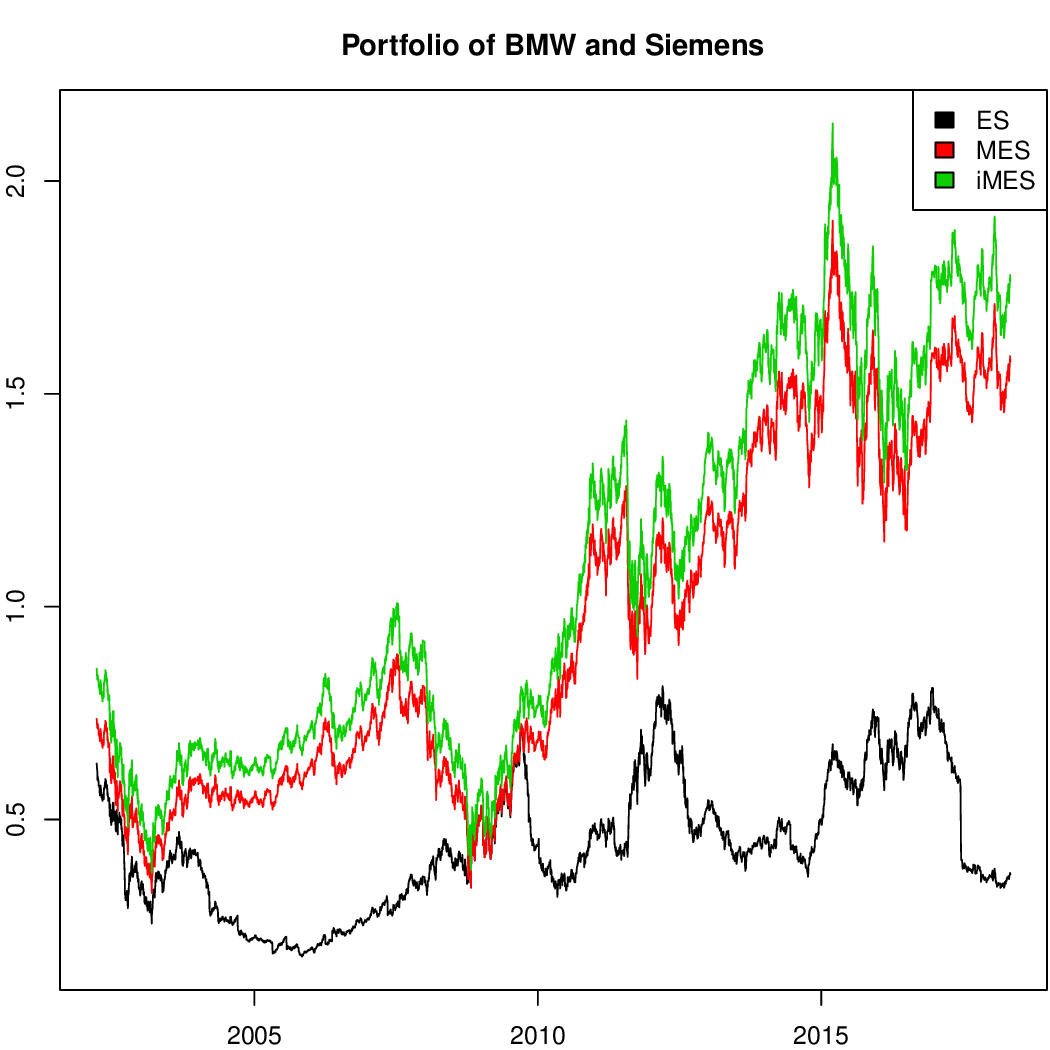}
\includegraphics[width=0.48\textwidth]{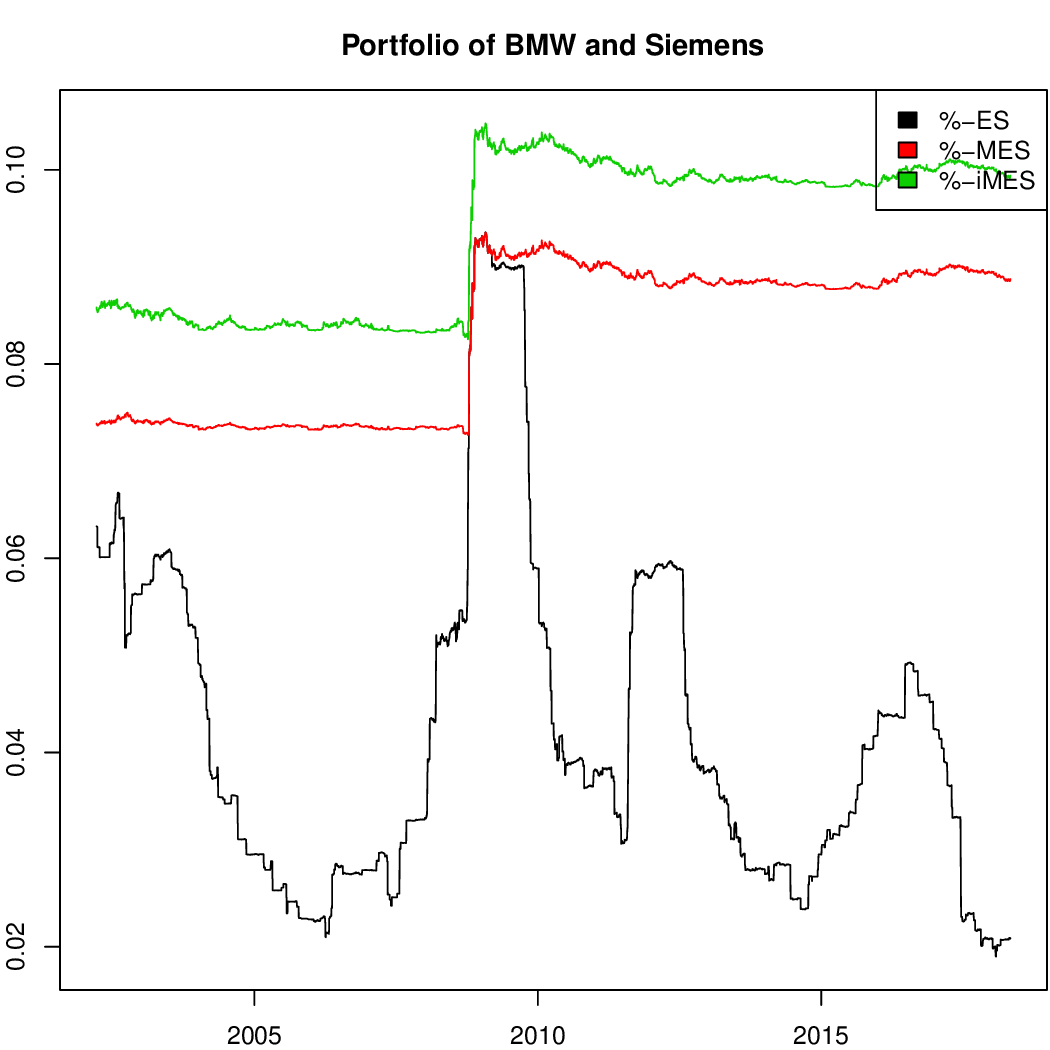}
\caption{The MES and ES of the US and German portfolios. Left panel: MES and ES of the portfolio. Right panel: the percentage of MES and ES in the value of the portfolio.}
\label{fig:4}
\end{figure}

\section{Concluding remarks}

In this paper, we proposed a framework for scenario-based risk evaluation, where different scenarios (probability measures or models) are incorporated into the procedure of risk calculation.
Our framework allows for flexible interpretation of the scenarios and it is in particular motivated by the Basel calculation procedures for the Expected Shortfall, the Chicago Mercantile Exchange, and the credit ratings, as explained in Section \ref{sec:11}.
Several theoretical contributions are made. We introduced the new classes of risk measures including Max-ES, Max-VaR and their variants, and studied their theoretical properties.
Axiomatic characterization of  scenario-based comonotonic additive and coherent classes of risk measures are obtained, and they are well connected to the Basel formulas for market risk.
Finally, we presented  data analyses to illustrate how scenario-based risk measures can be estimated, computed, and interpreted.

Given the pivotal importance of model uncertainty and scenario analysis in modern risk management, scenario-based risk measures can be useful in many disciplines of risk assessment, not limited to financial risk management.

We remark that for various interpretations of the scenarios, the estimation procedures of a scenario-based risk measure may exhibit different properties, as illustrated in Section \ref{sec:6}. This calls for future research in statistical theory for scenario-based risk functionals.
A challenging open question  is the characterization of scenario-based coherent risk measures for general scenarios without assuming mutual singularity.

\begin{acknowledgements}
The authors thank the Editor, an associate editor, two anonymous referees,  Rama Cont, Paul Embrechts, Damir Filipovic, Steven Kou, Fabio Maccheroni, Marco Maggis, Ilya Molchanov and Andreas Tsanakas for helpful discussions on an earlier version of the paper. Wang acknowledges financial support from the Natural Sciences and Engineering Research Council of Canada (RGPIN-2018-03823/RGPAS-2018-522590).
\end{acknowledgements}

\appendix\normalsize

\section{Appendix}
\subsection{Examples and counter-examples}\label{sec:appendix}

  \begin{example}[$\MES_p^\Q$ is not comonotonic-additive]\label{ex:mes-nca}
Let $p\in (0,1)$, and take $Q_1,Q_2\in \mathcal P$, $A_1, A_2\in \mathcal F$ such that
$A_1\subset A_2$, $Q_1[A_1]>Q_2[A_1]$ and also $Q_1[A_2]<Q_2[A_2]<1-p.$ The existence of such $Q_1,Q_2,A_1,A_2$ can be justified by taking $(\Omega,\mathcal F, Q_1)$ and $(\Omega,\mathcal F, Q_2)$ as atomless probability spaces.
Define the set $\Q=\{Q_1,Q_2\}$, $X=\id_{A_1}$ and $Y=\id_{A_2}$. It is clear that $X$ and $Y$ are comonotonic.
Recall that for a  Bernoulli random variable $Z$ under $Q$ with parameter $q$, we have 
$\ES_p^Q(Z)=q/(1-p)$. 
We have
\begin{align*}
\ES_p^{Q_1}(X+Y)&=\ES_p^{Q_1}(X)+\ES_p^{Q_1}(Y) =\frac{1}{1-p}(Q_1[A_1]+Q_1[A_2])
\\&< \frac{1}{1-p}(Q_1[A_1]+Q_2[A_2])=\max_{Q\in \Q} \ES_p^{Q}(X)  + \max_{Q\in \Q} \ES_p^{Q}(Y),
\end{align*}
and similarly,
\begin{align*}
\ES_p^{Q_2}(X+Y) < \max_{Q\in \Q} \ES_p^{Q}(X)  + \max_{Q\in \Q} \ES_p^{Q}(Y)=\MES_p^\Q(X)+\MES_p^\Q(Y).
\end{align*}
Then we have
$$\MES_p^{\Q}(X+Y)=\max\{\ES_p^{Q_1}(X+Y), \ES_p^{Q_2}(X+Y)\}<  \MES_p^\Q(X)+\MES_p^\Q(Y).$$
Thus, $\MES_p^\Q$ is not comonotonic-additive.
  \end{example}

  \begin{example}[$\MES^\Q_p(X)<\ES^\p_p(X)$ for $\Q$ in Example \ref{ex:ex21}] \label{ex:mes-es-ct}
  Consider the set $\Omega=\{\omega_1,\dots,\omega_8\}$ with eight pairwise distinct elements, and let $\p$ be the uniform probability measure on $\Omega$.
  Write $\Omega_1=\{\omega_1,\dots,\omega_4\}$ and $\Theta=\id_{\Omega_1}$.
Let  $Q_1[\cdot] =\p[\cdot|\Theta=1]$,  $Q_2[\cdot] =\p[\cdot|\Theta=0]$ and
 $X=\id_{\Omega_1} + 2\times \id_{\{\omega_8\}}.$
 It is easy to see that $\ES_p^\p(X)=1.25$ and $\ES_p^{Q_1} (X)=\ES_p^{Q_2}(X)=1$.
 Thus, $\MES^\Q_p(X)<\ES^\p_p(X)$.

  \end{example}

    \subsection{Extension of Proposition \ref{th:th32}}
    \label{app:pr5}
    We show that parts (i), (ii) and (iv) of Proposition \ref{th:th32} also hold if $\Q$ is infinite. 
    
    Concerning (i), $\ES_p^Q$ is coherent for $Q\in \Q$.  Since $\MES_p^Q$ can be written as a supremum of coherent risk measures, and taking a supremum preserves all properties of coherent risk measures, $\MES_p^\Q$ is also coherent. An example showing that  $\MES_p^\Q$ is not comonotonic-additive is given in Example \ref{ex:mes-nca}.
Concerning (ii), $\VaR_p^Q$ is monetary for $Q\in \Q$, and hence $\MVaR_p^\Q$,  as a supremum of monetary risk measures, is monetary.
  It remains to show that  $\MVaR_p^\Q$ is a comonotonic-additive risk measure. Using Denneberg's lemma \cite{D94}, for comonotonic random variables $X$ and $Y$, there exist  increasing continuous functions $f$ and $g$ such that $X=f(X+Y)$ and $Y=g(X+Y)$.
  Therefore, for any $Q\in \Q$, we have
 \begin{align*}
 \MVaR_p^\Q(X)=\sup_{Q\in \Q}\VaR_p^Q(f(X+Y))&= \sup_{Q\in \Q}f(\VaR_p^Q(X+Y))\\&=f\Big(  \sup_{Q\in \Q} \VaR_p^Q(X+Y)\Big),
 \end{align*}
 and similarly,
 $$\MVaR_p^\Q(Y)=g\Big(  \sup_{Q\in \Q} \VaR_p^Q(X+Y)\Big).$$
 {Note that $f(z)+g(z)=z$ for $z$ in the range of $X+Y$,
 and by continuity, $f$ and $g$ also satisfy $f(z)+g(z)=z$ for $z=\sup_{Q\in \Q} \VaR_p^Q(X+Y)$.}
Hence, we have
\begin{align*}
\MVaR_p^{\Q}(X+Y)&=  \sup_{Q\in \Q} \VaR_p^Q(X+Y) \\&= f\Big(  \sup_{Q\in \Q} \VaR_p^Q(X+Y)\Big) + g\Big(  \sup_{Q\in \Q} \VaR_p^Q(X+Y)\Big)
\\&= \MVaR_p^\Q(X)+\MVaR_p^\Q(Y).
\end{align*}
  The statement that $\MVaR_p^\Q$ is not necessarily coherent comes from the well-known fact that $\VaR_p^Q$ is not coherent for any $Q\in \mathcal P$ such that $(\Omega,\mathcal F,Q)$ is atomless. Concerning (iv), it suffices to note that ${\GES}_p^\Q$ is a mixture of comonotonic-additive risk measures, and hence it is co\-mo\-no\-to\-nic-additive.


\begin{thebibliography}{}
 

\bibitem[\protect\citeauthoryear{Acharya, Engle and Richardson}{Acharya et.~al.}{2012}]{AER12} Acharya, V., Engle, R. and Richardson, M. (2012). Capital shortfall: A new approach to ranking and regulating systemic risks. \emph{American Economic Review}, \textbf{102}, 59-64.

\bibitem[\protect\citeauthoryear{Adrian and Brunnermeier}{2016}]{AB16} Adrian, T. and Brunnermeier, M. K. (2016). CoVaR. \emph{American Economic Review},  \textbf{106}, 1705--1741.

\bibitem[\protect\citeauthoryear{Anscombe and Aumann}{1963}]{AA63}
Anscombe, F.~J. and Aumann, R.~J. (1963). A definition of subjective probability. \emph{Annals of Mathematical Statistics}, \textbf{34}, 199--205.

\bibitem[\protect\citeauthoryear{Artzner et al.}{Artzner et al.}{1999}]{ADEH99}
{Artzner, P., Delbaen, F., Eber, J.-M. and Heath, D.} (1999). Coherent measures of risk. \emph{Mathematical Finance}, \textbf{9}, 203--228.


  \bibitem[\protect\citeauthoryear{{Basel Committee on Banking
  Supervision}}{{BCBS}}{2016}]{BASEL16}
{BCBS} (Basel Committee on Banking
  Supervision) (2016).
  {\em  Standards. Minimum capital requirements for Market Risk.  January 2016.}
 Available at \texttt{www.bis.org}.


  \bibitem[\protect\citeauthoryear{{Basel Committee on Banking
  Supervision}}{{BCBS}}{2017}]{BASEL17}
{BCBS} (Basel Committee on Banking
  Supervision) (2017).
  {\em  Frequently asked
questions on market risk
capital requirements.  January 2017.}
  Available at \texttt{www.bis.org}.


\bibitem[\protect\citeauthoryear{Cambou and Filipovic}{Cambou and Filipovic}{2017}]{CF17}
Cambou, M. and Filipovic, D. (2017). Model uncertainty and scenario aggregation. \emph{Mathematical Finance},  \textbf{27}, 534--567.

\bibitem[\protect\citeauthoryear{Cerreia-Vioglio et al.}{2013}]{CMMM13}
Cerreia-Vioglio, S., Maccheroni, F., Marinacci, M. and Montrucchio, L. (2013). Ambiguity and robust statistics.  
\emph{Journal of Economic Theory}, \textbf{148}, 974--1049. 


\bibitem[\protect\citeauthoryear{CME}{CME}{2010}]{CME10}
CME (Chicago Mercantile Exchange) (2010). \emph{CME SPAN: standard portfolio analysis of risk}. Available at \texttt{www.cmegroup.com}.


\bibitem[\protect\citeauthoryear{Cherny  and Madan}{Cherny  and Madan}{2009}]{CM09}
Cherny, A. S. and Madan, D. (2009). New measures for performance evaluation. \emph{Review of Financial Studies}, \textbf{22}, 2571--2606.





\bibitem[\protect\citeauthoryear{Delbaen}{2002}]{D00}
Delbaen, F. (2002).   Coherent risk measures on general probability spaces. In: \emph{Advances in Finance and Stochastics}. Springer, Berlin, Heidelberg, 1-37.


\bibitem[\protect\citeauthoryear{Delbaen}{2012}]{D12}
Delbaen, F. (2012).  \emph{Monetary Utility Functions}. Osaka University Press, Osaka.



\bibitem[\protect\citeauthoryear{Denneberg}{1994}]{D94}
{Denneberg, D.} (1994).
{\em Non-additive Measure and Integral.}
{Springer Science \& Business Media.}



\bibitem[\protect\citeauthoryear{Dhaene et al.}{Dhaene et al.}{2012}]{DKLT12}
{Dhaene, J. and Kukush, A., Linders, D. and Tang, Q.} (2012). Remarks on quantiles and distortion risk measures. \emph{European Actuarial Journal}, \textbf{2}, 319--328.





\bibitem[\protect\citeauthoryear{Du and Escanciano}{2017}]{DE17}
{Du, Z. and Escanciano, J.~C.} (2017). Backtesting expected shortfall: Accounting for tail risk. \emph{Management Science}, \textbf{63}, 940--958.


\bibitem[\protect\citeauthoryear{Duffie et al.}{2015}]{DSV15}
Duffie, D., Scheicher, M.  and Vuillemey, G. (2015). Central clearing and collateral demand.
\emph{Journal of Financial Economics}, \textbf{116}, 237--256.

\bibitem[\protect\citeauthoryear{Embrechts et al.}{2018}]{ELW17}
Embrechts, P., Liu, H. and Wang, R. (2018). Quantile-based risk sharing. \emph{Operations Research}, \textbf{66}, 936--949.



\bibitem[\protect\citeauthoryear{Embrechts et al.}{Embrechts et al.}{2014}]{EPRWB14} {Embrechts, P., Puccetti, G., R\"uschendorf, L., Wang, R. and Beleraj, A.} (2014). An academic response to Basel 3.5. \emph{Risks}, \textbf{2}, 25-48.

\bibitem[\protect\citeauthoryear{Embrechts et al.}{Embrechts et al.}{2015}]{EWW15} {Embrechts, P., Wang, B. and Wang, R.} (2015). Aggregation-robustness and model uncertainty of regulatory risk measures.  {\em Finance and Stochastics},  \textbf{19}, 763--790.



\bibitem[\protect\citeauthoryear{Fissler and Ziegel}{Fissler and Ziegel}{2016}]{FisslerZiegel2016} {Fissler, T. and Ziegel, J. F.} (2016). Higher order elicitability and Osband's principle. \emph{Annals of Statistics}, \textbf{44}, 1680--1707.



\bibitem[\protect\citeauthoryear{F\"{o}llmer and Schied}{F\"{o}llmer and Schied}{2002}]{FS02} {F\"{o}llmer, H. and Schied, A.} (2002).
Convex measures of risk and trading constraints. \emph{Finance and Stochastics}, \textbf{6}, 429--447.





 \bibitem[\protect\citeauthoryear{F\"ollmer and Schied}{F\"ollmer and Schied}{2016}]{FS16} F\"ollmer, H.~and Schied, A.~(2016). \emph{Stochastic Finance. An Introduction in Discrete Time}.  {Walter de Gruyter, Berlin}, Fourth Edition.

%
%
 \bibitem[\protect\citeauthoryear{Gilboa and Schmeidler}{Gilboa and Schmeidler}{1989}]{GS89}
Gilboa, I. and Schmeidler, D. (1989). Maxmin expected utility with non-unique prior. \emph{Journal of  Mathematical Economics}, \textbf{18},
141--153.

 \bibitem[\protect\citeauthoryear{Hansen and Marinacci}{2016}]{HM16}
Hansen, L. P. and Marinacci, M. (2016). Ambiguity aversion and model misspecification: An economic perspective. \emph{Statistical Science}, \textbf{31}, 511--515.


\bibitem[\protect\citeauthoryear{{IAIS}}{{IAIS}}{2014}]{IAIS14}
IAIS (2014). \emph{Consultation Document December 2014. Risk-based global insurance capital standard}.  International Association of Insurance Supervisors.  Available at
\texttt{www.iaisweb.org}.

\bibitem[\protect\citeauthoryear{Joe}{Joe}{2014}]{J14}
Joe, H. (2014).
{\em Dependence Modeling with Copulas}.
London: Chapman \& Hall.


\bibitem[\protect\citeauthoryear{Jouini et al.}{2006}]{JST06}
Jouini, E., Schachermayer, W. and Touzi, N. (2006). Law invariant risk measures have the Fatou property. \emph{Advances in Mathematical Economics}, \textbf{9}, 49--71.

\bibitem[\protect\citeauthoryear{Kou and Peng}{2016}]{KP16}
{Kou, S. and Peng, X.} (2016). On the measurement of economic tail risk. \emph{Operations Research}, \textbf{64}, 1056--1072.


\bibitem[\protect\citeauthoryear{Kou et al.}{2013}]{KPH13}
{Kou, S., Peng, X. and Heyde, C. C.} (2013). External risk measures and Basel accords. \emph{Mathematics of Operations Research}, \textbf{38}, 393--417.



  \bibitem[\protect\citeauthoryear{Kr\"atschmer, Schied and
  	Z\"ahle}{Kr\"atschmer et~al.}{2017}]{KSZ17}
  Kr\"atschmer, V., Schied, A. and ~Z\"ahle, H. (2017).
  Domains of weak continuity of statistical functionals with a view toward robust statistics.
  \emph{Journal of Multivariate Analysis}, \textbf{158}, 1--19.

\bibitem[\protect\citeauthoryear{Kusuoka}{Kusuoka}{2001}]{K01}
{Kusuoka, S.} (2001). On law invariant coherent risk measures. \emph{Advances in Mathematical Economics}, \textbf{3}, 83--95.

\bibitem[\protect\citeauthoryear{Li and Xing}{2019}]{LX18}
{Li, L. and Xing, H.} (2019). Capital allocation under the Fundamental Review of Trading Book. \emph{Risk Magazine}, 06 Jun 2019. \texttt{arXiv:1801.07358v1}.


\bibitem[\protect\citeauthoryear{McNeil et al.}{McNeil et al.}{2015}]{MFE15}
{McNeil, A. J., Frey, R. and Embrechts, P.} (2015). \emph{Quantitative
Risk Management: Concepts, Techniques and Tools}. Revised Edition.  Princeton, NJ:
Princeton University Press.


\bibitem[\protect\citeauthoryear{McNeil and Ne\v{s}lehov\'a}{McNeil and Ne\v{s}lehov\'a}{2009}]{McNeilNeslehova2009}
{McNeil, A. J. and Ne\v{s}lehov\'a, J.} {2009}. Multivariate Archimedean Copulas, $d$-monotone functions and $l_1$-norm symmetric distributions. \emph{Annals of Statistics}, \textbf{37}, 3059--3097.


\bibitem[\protect\citeauthoryear{Moody's}{2010}]{M10}
Moody's (2010). \emph{Moody's structured finance rating scale}. Available at \texttt{www.moodys.com}.




\bibitem[\protect\citeauthoryear{M\"uller and Stoyan}{M\"uller and Stoyan}{2002}]{MS02} {M\"uller, A. and Stoyan, D.} (2002). \emph{Comparison Methods for Stochastic Models and Risks}. Wiley, England.

%
%


\bibitem[\protect\citeauthoryear{OCC}{OCC}{2011}]{OCC11} {OCC (Office of the Comptroller of the Currency)} (2011). \emph{Supervisory guidance on model risk management}. OCC 2011-12, Board of Governors of the Federal Reserve System.




\bibitem[\protect\citeauthoryear{Righi}{2020}]{R18}
Righi M.~B. (2020). A theory for combinations of risk measures.
\texttt{arXiv:1807.01977v6}.



\bibitem[\protect\citeauthoryear{Rockafellar et al.}{Rockafellar et al.}{2006}]{RUZ06}
Rockafellar, R. T., Uryasev, S. and Zabarankin, M. (2006). Generalized deviation in risk analysis. \emph{Finance and Stochastics}, \textbf{10}, 51--74.

%
%


\bibitem[\protect\citeauthoryear{R{\"u}schendorf}{R{\"u}schendorf}{2013}]{R13}
R{\"u}schendorf, L. (2013).
  {\em Mathematical Risk Analysis. Dependence, Risk Bounds, Optimal
  Allocations and Portfolios}.
  Springer, Heidelberg.


\bibitem[\protect\citeauthoryear{Schmeidler}{Schmeidler}{1986}]{S86}
{Schmeidler, D.} (1986). Integral representation without additivity.
\emph{Proceedings of the
American Mathematical Society}, \textbf{97}, 255--261.





\bibitem[\protect\citeauthoryear{Shen et al.}{2019}]{SSWW19} 
Shen, J., Shen, Y., Wang, B. and Wang, R. (2019). Distributional compatibility for change of measures. \emph{Finance and Stochastics}, \textbf{23}, 761--794.

\bibitem[\protect\citeauthoryear{Standard and Poor's}{2009}]{SP09}
{Standard and Poor's} (2009). \emph{Understanding {Standard and Poor's} Rating
  Definitions}. Available at \texttt{standardandpoors.com}.

\bibitem[\protect\citeauthoryear{Wang and Zitikis}{2020}]{WZ20}
 Wang, R. and Zitikis, R. (2020). An axiomatic foundation for the Expected Shortfall. \emph{Management Science}, DOI: 10.1287/mnsc.2020.3617.


\bibitem[\protect\citeauthoryear{Yaari}{Yaari}{1987}]{Y87}
{Yaari, M. E.} (1987). The dual theory of choice under risk. \emph{Econometrica}, \textbf{55}, 95--115.


\bibitem[\protect\citeauthoryear{Zhu and Fukushima}{Zhu and
  Fukushima}{2009}]{ZF09}
Zhu, S. and Fukushima, M. (2009).
\newblock Worst-case conditional {V}alue-at-{R}isk with applications to robust
  portfolio management.
\newblock \emph{Operations Research}, \textbf{57}, 1155--1168.

\bibitem[\protect\citeauthoryear{Zymler, Kuhn, and Rustem}{Zymler
  et~al.}{2012}]{ZKR12}
Zymler, S., Kuhn, D. and Rustem, B.~(2012).
\newblock Worst-case {V}alue at {R}isk of nonlinear portfolios.
\newblock \emph{Management Science}, \textbf{59}, 172--188.


\end{thebibliography}
\end{document}